\documentclass{amsart}%
\usepackage{amsfonts}%
\usepackage{amsmath}%
\setcounter{MaxMatrixCols}{30}%
\usepackage{amssymb}%
\usepackage{graphicx}
\usepackage{geometry}
\usepackage{enumerate}
\usepackage{color}
\usepackage[table]{xcolor}
\usepackage{algorithm}
\usepackage{multirow}

\usepackage{enumitem}
 \geometry{
 a4paper,
 total={145mm,225mm},
 left=30mm,
 top=40mm,
 }
 
\providecommand{\U}[1]{\protect \rule{.1in}{.1in}}
\newtheorem{theorem}{Theorem}
\theoremstyle{plain}

\newtheorem{definition}{Definition}

\newtheorem{remark}{Remark}

\numberwithin{equation}{section}

\begin{document}
\title[Short Title]{
    Solar Panel Selection using Extended WASPAS with Disc Intuitionistic Fuzzy Choquet Integral Operators: CASPAS Methodology
}
\author{Mahmut Can Bozy\.{ı}\u{g}\.{ı}t$^{1}$}
	\address{$^{1}$Ankara Y{\i}ld{\i}r{\i}m Beyaz{\i}t University, Faculty of Engineering and Natural Sciences, Department of Mathematics, 	06010 Ankara Turkey} \email{mcbozyigit@aybu.edu.tr}
\author{Mehmet \"{u}nver$^{2}$}
	\address{$^{2}$Ankara University, Faculty of Science, Department of Mathematics, 06100 Ankara Turkey}
	\email{munver@ankara.edu.tr}
\subjclass[2020]{94D05, 90B50, 28E10}
\keywords{D-IFSs, Choquet integral operators, CASPAS, MCDM, Solar panels}
\begin{abstract}
Renewable energy is crucial for addressing the growing energy demands of modern society while mitigating the adverse effects of climate change. Unlike fossil fuels, renewable energy sources such as solar, wind, hydro, geothermal, and biomass are abundant, sustainable, and environmentally friendly. This study focuses on addressing a critical challenge in renewable energy decision-making by developing a novel framework for optimal solar panel selection, a key component of sustainable energy solutions. Solar panel selection involves evaluating multiple interdependent criteria, such as efficiency, cost, durability, and environmental impact. Traditional multi-criteria decision-making (MCDM) methods often fail to account for the interdependencies among these criteria, leading to suboptimal outcomes. To overcome this limitation, the study introduces the Choquet Aggregated Sum Product Assessment (CASPAS) method, a Choquet integral-based MCDM approach that incorporates fuzzy measures to model interactions among criteria. CASPAS generalizes the Weighted Aggregated Sum Product Assessment (WASPAS) method, thereby enhancing decision-making accuracy and reliability. This study also introduces the concept of disc intuitionistic fuzzy set (D-IFS), a generalization of the concept of circular intuitionistic fuzzy set, which employ a radius function capable of assigning varying values to individual elements instead of relying on a fixed radius.  Recognizing that traditional weighted aggregation operators neglect the interaction among criteria, this study proposes disc intuitionistic fuzzy Choquet integral operators by incorporating the concept of fuzzy measures, which are effective in modeling such interactions. The proposed method is applied to a renewable energy problem on selecting optimal solar panels. This application demonstrates how the CASPAS method develops decision-making accuracy by incorporating interdependencies among selection criteria. Comparative analysis with existing MCDM methods, alongside sensitivity and validation analyses, underscore the robustness and effectiveness of the CASPAS method.

\end{abstract}
\maketitle

\section{Intruduction}

Renewable energy sources such as solar, wind, hydroelectric, geothermal, and biomass are essential for achieving a sustainable future. These sources offer significant benefits by reducing greenhouse gas emissions and helping to mitigate climate change \cite{Sathaye, Sorensen}. 
Figure \ref{fig-a} shows all types of renewable energy. Solar energy captures sunlight using photovoltaic cells, while wind energy converts the kinetic energy of wind into electricity through turbines. Hydroelectric power generates electricity by utilizing the potential energy stored in water behind dams. Geothermal energy taps into the heat from the Earth's interior, and biomass energy comes from organic materials. These sources are environmentally friendly, helping to reduce our dependence on fossil fuels, which are major contributors to carbon dioxide emissions and global warming \cite{Attanayake}. Although renewable energy technologies often come with high initial costs and can be influenced by geographic and climatic conditions, ongoing advancements in technology and supportive policies are making these sources more accessible and cost-effective \cite{Dincer}. Transitioning to renewable energy not only addresses environmental challenges but also promotes energy security and stimulates economic growth \cite{Tekbıyık}.

\begin{figure}[h!]
    \centering
    \includegraphics[scale=0.42]{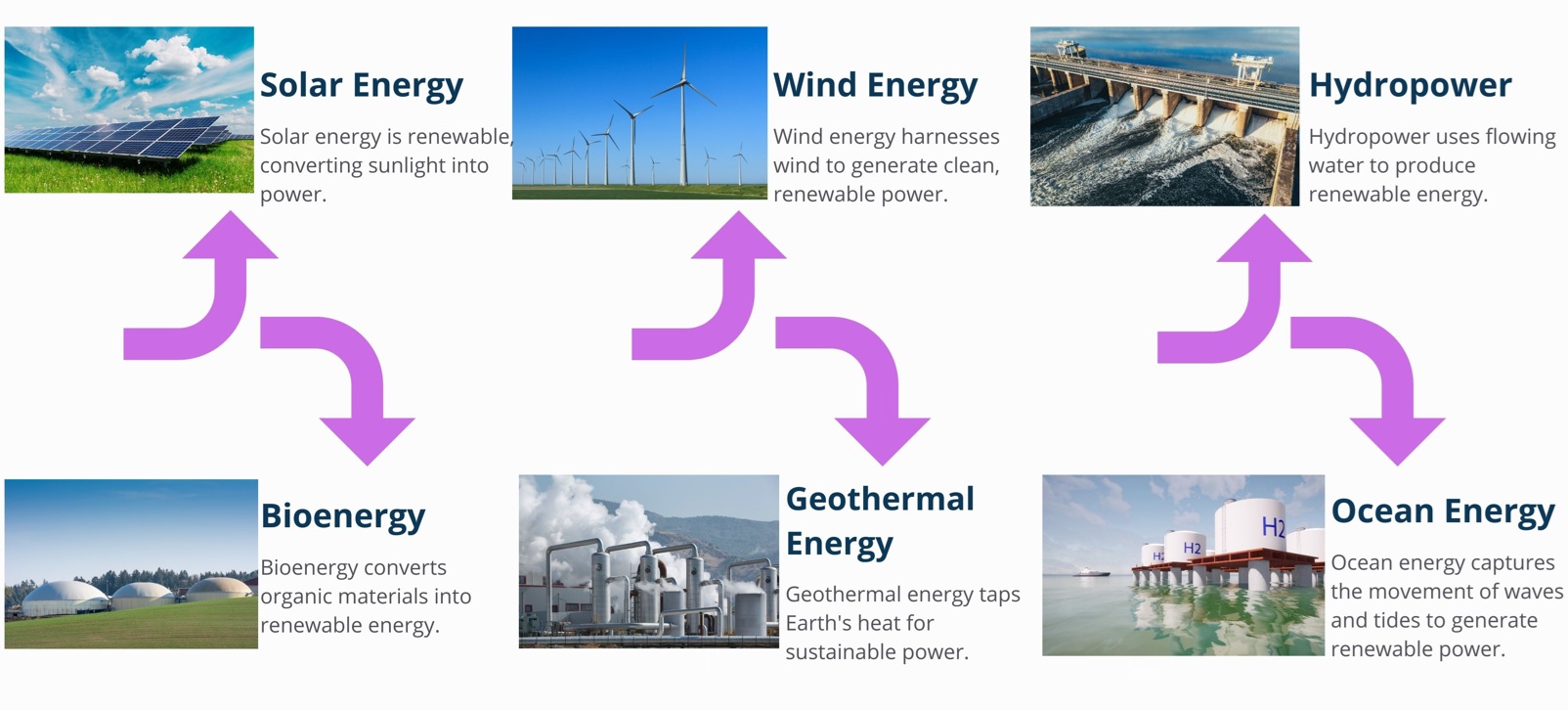}
\caption{Types of renewable energy}
    \label{fig-a}
\end{figure}

\footnotesize

\begin{table}[ht]
\centering

\begin{tabular}{llrrrrr}
\hline
\textbf{Region} & \textbf{Year}& \textbf{Hydropower} & \textbf{Wind E.} & \textbf{Solar E.} & \textbf{Bioenergy} & \textbf{Geothermal E.} \\
\hline
      & 2023   & 1 406 863  & 1 017 390  & 1 418 016  & 148 840  & 15 026  \\
World   & 2022       & 1 394 824  & 902 883  & 1 070 851  & 144 286  & 14 653  \\
        & 2021  & 1 362 416  & 824 321  & 870 643  & 138 338  & 14 432  \\ \hline
      & 2023   & 40 280     & 8 654      & 13 438     & 1 899   & 991     \\
Africa  & 2022       & 39 275    & 7 745      & 12 646     & 1 916   & 956    \\
    & 2021     & 37 529     & 6 909     & 11 582     & 1 917   & 870    \\ \hline
      & 2023     & 630 030    & 508 599    & 839 329    & 62 068  & 5 011  \\
Asia  & 2022          & 620 391   & 426 482    & 601 390    & 58 729  & 4 761   \\
  & 2021         & 594 325    & 384 742    & 488 907    & 54 608  & 4 732   \\ \hline
     & 2023    & 225 728   & 255 615    & 288 644    & 42 825  & 1 637  \\
Europe  & 2022        & 225 804    & 240 278    & 233 906    & 42 458  & 1 637  \\
   & 2021      & 224 212    & 221 517    & 190 928    & 41 601  & 1 633   \\ \hline

  & 2023 & 199 492    & 172 328    & 156 000    & 14 576  & 3 678   \\
N. America & 2022   & 199 581    & 164 389    & 129 156    & 14 684 & 3 647   \\
 & 2021  & 198 861    & 154 210    & 109 579    & 14 906  & 3 595  \\ \hline
 & 2023 & 180 962    & 39 852    & 49 392    & 20 348  & 83      \\
S. America & 2022 & 180 018    & 33 580    & 34 698   & 19 552 & 51      \\
 & 2021 & 178 482   & 29 737     & 21 259    & 18 622  & 40     \\ \hline
     &2023  & 15 542     & 14 045     & 33 417     & 1 106   & 1 100   \\
Oceania  &2022      & 14 749     & 13 035     & 29 566     & 1 104   & 1 100   \\
     &2021    & 14 509     & 11 522     & 24 895     & 1 104  & 1 093   \\
\hline
\end{tabular}
\caption{CAP (MW) by energy source and region between 2021 and 2023 according to IRENA \cite{IRENA}}
\label{cap}
\end{table}
\normalsize

Solar energy offers several key advantages over other renewable sources. One of its primary benefits is its accessibility and flexibility. Solar panels can be installed on rooftops or open spaces, making it suitable for both urban and rural areas \cite{Novas}. Unlike wind turbines, which require specific locations with consistent wind or hydroelectric plants that need large water bodies and can disrupt local ecosystems, solar installations are less intrusive and can be deployed in diverse environments \cite{Lazaroiu,Sathaye}. Moreover, solar energy systems have low maintenance costs and produce no emissions during operation, making them an environmentally friendly choice \cite{Novas}. The falling costs of solar technology also make it an increasingly affordable option for many consumers \cite{Lazaroiu,Sathaye}. Table \ref{cap} presents the global capacity in megawatt (MW) for renewable energy sources from 2021 to 2023 \cite{IRENA}. Table \ref{cap} highlights global and regional trends in renewable energy capacities (MW) from 2021 to 2023, showing significant growth in solar and wind energy. Solar energy, the fastest-growing source, nearly doubles globally, with Asia leading the increase, followed by Europe and North America. This rapid growth reflects falling costs of photovoltaic technology, increasing efficiency, and widespread policy support for clean energy transitions. Wind energy also expands rapidly, particularly in Asia and Europe, while hydropower, the largest contributor, grows more modestly due to its already extensive development. Regions like South America and Africa exhibit slower but steady progress, focusing on solar and hydropower. Bioenergy and geothermal energy maintain stable contributions worldwide. The remarkable rise of solar energy not only emphasizes its critical role in reducing carbon emissions but also highlights its accessibility and scalability, making it a cornerstone of the global shift toward cleaner, more sustainable energy systems. Figure \ref{fig-54} shows the growth of solar photovoltaic  production in Gigawatt-hours (GWh) globally and across regions from 2014 to 2022 \cite{IRENA}. Worldwide production increased significantly, from 183,473 GWh in 2014 to 1,281,654 GWh in 2022, reflecting rapid adoption of solar energy. Asia leads with the highest production, growing from 55,392 GWh to 686,755 GWh, followed by Europe and North America, which reached 231,064 GWh and 210,107 GWh, respectively. While Africa, South America, and Oceania started with lower production levels, they also saw notable growth, particularly after 2019, driven by investments in renewable energy and technological advancements. This trend underscores the global shift toward cleaner energy sources.

\begin{figure}[h!]
    \centering
    \includegraphics[scale=0.55]{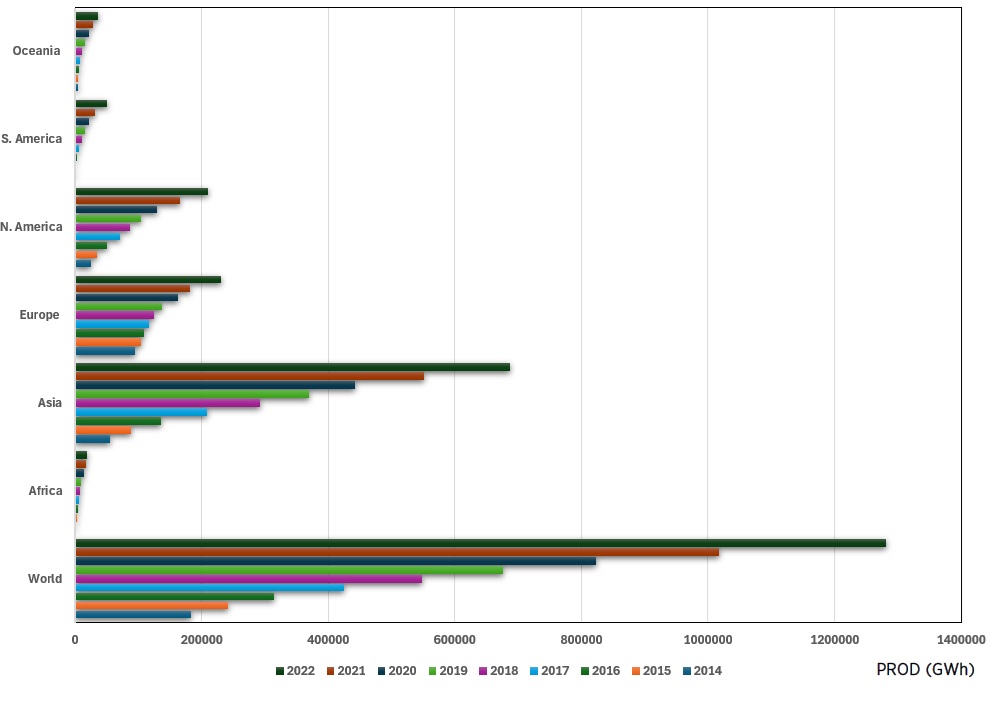}
\caption{Global solar photovoltaic production data by region between 2014 and 2022}
    \label{fig-54}
\end{figure}

Solar panels, often referred to as photovoltaic panels, are systems that harness sunlight and transform it into electricity \cite{Fraas}. This process, known as the photovoltaic effect, occurs when light energy stimulates electrons within a semiconductor material, producing an electric current \cite{Nag}. Made up of numerous interconnected solar cells, these panels are utilized extensively across residential, commercial, and industrial sectors, as well as in space missions \cite{Spanggaard}. This technology, which has an important place among renewable energy sources today, has undergone a long scientific and technological development process. The roots of solar panel technology trace back to 1839, when Becquerel \cite{Becquerel} discovered the photovoltaic effect. He found that certain materials could produce an electric current under sunlight \cite{Fraas}. While this discovery was revolutionary, it initially remained a theoretical concept without immediate practical use. Becquerel's observation went unreplicated until 1873, when Smith \cite{Smith} found that light striking selenium could generate a charge. Building on this finding, Adams and Day \cite{Adams} conducted experiments to confirm Smith's results and published their work, The Action of Light on Selenium, in 1876. In 1883, Charles Fritts developed the first functional solar cell using selenium \cite{Fraas,Nag}. Despite its low efficiency of just 1\%, this marked the beginning of the technology capable of directly converting sunlight into electricity \cite{Fraas}.  Einstein \cite{Einstein} made a significant impact on solar panel technology by explaining the photoelectric effect in 1905. He proposed that light consists of tiny energy packets known as photons. When these photons hit a material, they can release electrons from its surface, a process known as the photoelectric effect. This discovery is fundamental to the operation of solar cells \cite{Nag}. In 1939,  Ohl \cite{Ohl} developed the solar cell design that is now commonly used in many modern solar panels. He patented his design in 1941. In 1954, scientists  Chapin et al. \cite{Chapin} at Bell Laboratories developed the first modern solar cell using silicon. Their creation had an efficiency of around 6\%, marking the start of practical solar panel technology \cite{Fraas}. During this period, solar panels became commercially available, but their high production costs restricted their use primarily to specialized applications like satellites. Solar panels were first used in space on the Vanguard I satellite in 1958. This marked a critical point for solar energy development, as it demonstrated their reliability in extreme conditions \cite{Easton}. The 1970s oil crisis spurred interest in renewable energy, leading to significant investments in solar technology \cite{Schumacher}. During this time, the cost of solar power began to decline as production scaled up. Today, solar panels play a crucial role in global renewable energy strategies, with ongoing innovations continually improving their efficiency and making them more accessible \cite{Nag}. Figure \ref{timeline} illustrates the timeline of solar panel development.

\begin{figure}[h!]
    \centering
    \includegraphics[scale=0.37]{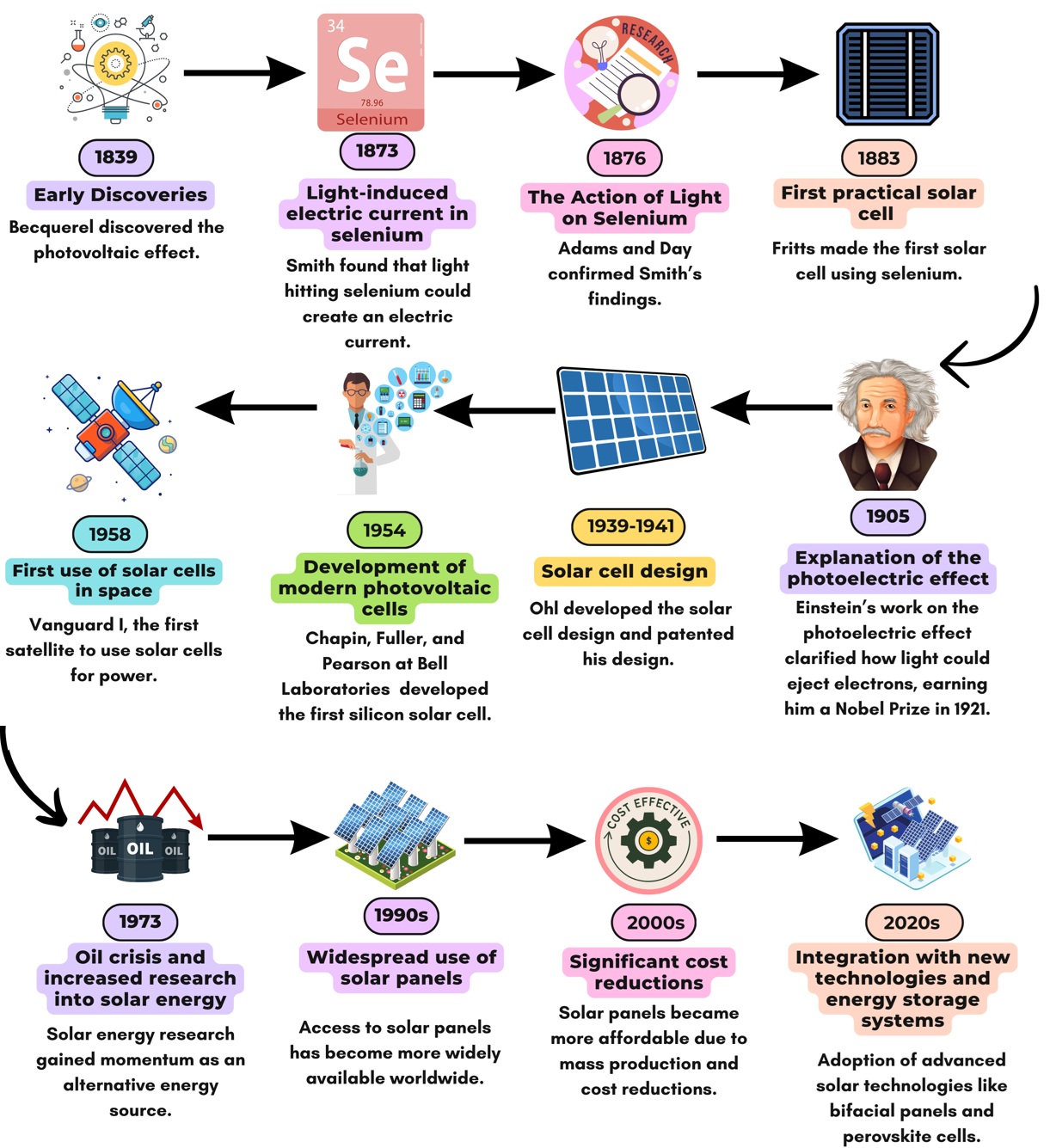}
\caption{Timeline of the development of solar panels}
    \label{timeline}
\end{figure}

Fuzzy sets (FSs), introduced by Zadeh \cite{Zadeh}, represent a significant extension of classical set theory by providing a means to handle uncertainty and vagueness in data. Unlike crisp sets, where an element either fully belongs to a set or does not belong at all, FSs allow for partial membership. Formally, a FS \( A \) in a universe of discourse \( \Delta \) is defined by a membership function \( \mu_A: \Delta \to [0,1] \), where \( \mu_A(\delta) \) quantifies the degree to which an element \( \delta\) belongs to \( A \). This approach is particularly useful in scenarios where sharp boundaries are not appropriate, such as in the analysis of imprecise or subjective information. By accommodating varying degrees of membership, FSs provide a flexible mathematical framework for modeling real-world phenomena where crisp distinctions are not feasible. 

Despite their utility, FSs have limitations in capturing the full spectrum of uncertainty. To address these, Atanassov \cite{Atanassov1} proposed intuitionistic fuzzy sets (IFSs), which further improve the expressive power of FSs. IFSs expand on the concept of partial membership by associating each element in a set with two  parameters: a membership degree and a non-membership degree. These values provide a more nuanced representation of uncertainty by allowing the sum of membership and non-membership degrees to be less than or equal to 1, thereby introducing a third dimension called hesitation. Hesitation reflects the degree of indeterminacy or lack of knowledge about an element’s status within the set. This additional flexibility makes IFSs highly effective for applications requiring a deeper and more comprehensive representation of uncertainty, such as decision-making, pattern recognition, and fuzzy control systems. By capturing both membership and non-membership alongside hesitation, IFSs offer a more robust tool for tackling complex problems characterized by ambiguity and incomplete information.

In decision-making problems involving uncertainty and ambiguity, assigning specific intuitionistic fuzzy values (IFVs) to evaluation criteria can be challenging. This difficulty stems from the inherent uncertainty in determining the exact degrees to which an element belongs or does not belong to a set. Traditional IFSs may lack the necessary flexibility to handle situations where multiple decision-makers interpret criteria differently or have diverse perspectives. To address this limitation, Atanassov \cite{Atanassov2, Atanassov3} introduced the concept of circular intuitionistic fuzzy sets (C-IFSs), an extension of the conventional IFS framework. C-IFSs use a circular structure to represent the degrees of membership and non-membership, which inherently captures the vagueness and uncertainty associated with these values. This circular representation provides a more dynamic and sensitive way to reflect changes in both membership and non-membership degrees, making it particularly useful in complex decision-making scenarios. The circular design enhances flexibility by accommodating the imprecision and ambiguity often present in real-world problems. As a result, C-IFSs offer decision-makers a more robust framework for evaluating alternatives, especially when assigning precise IFVs to criteria is impractical. 
Building on this concept, Bozyiğit et al. \cite{Bozyiğit} introduced circular Pythagorean fuzzy sets (C-PFSs), which expand upon C-IFSs by offering a broader representational range. A C-PFS is visualized as a circle that encapsulates both the degree of membership and the degree of non-membership. The center of this circle is defined by two non-negative real numbers, \( \mu \) and \( \nu \), which must satisfy the condition \( 0 \leq \mu^2 + \nu^2 \leq 1 \). Since the radius degree remains constant for each C-PFS, disc Pythagorean fuzzy sets (D-PFSs) were proposed by replacing this constant radius with a radius function \cite{Khan}. This approach introduces flexibility to the radius degree.

Multi-Criteria Decision-Making (MCDM) is a specialized field within decision science that addresses the challenge of identifying the most suitable alternatives when faced with multiple and sometimes conflicting criteria. It offers a structured approach to help decision-makers navigate complex situations by systematically analyzing trade-offs between various objectives. MCDM techniques combine both quantitative and qualitative information, enabling a comprehensive evaluation of options. Through the use of mathematical frameworks and well-defined decision processes, MCDM improves the ability to pinpoint optimal solutions, ensures greater clarity and fairness in decision-making, and supports alignment with overarching strategic objectives. Weighted Aggregated Sum Product Assessment (WASPAS) method, introduced by Zavadskas et al. \cite{Zavadskas1}, is a robust and versatile MCDM approach that integrates  Weighted Sum Model (WSM) and  Weighted Product Model (WPM). By combining these two methodologies, WASPAS aims to develop decision-making accuracy and reliability. It assigns weights to criteria based on their relative importance and evaluates alternatives through a hybrid aggregation mechanism, which includes both additive and multiplicative components. The flexibility of WASPAS allows it to effectively handle a wide range of decision-making scenarios, particularly in cases where decision-makers need to balance between conflicting criteria. Fuzzy WASPAS extends the traditional WASPAS method by incorporating fuzzy set theory to address uncertainties and imprecision in decision-making \cite{Turskis, Zavadskas2}. In real-world applications, decision-makers often face ambiguity in assigning precise values to criteria weights and alternatives. Fuzzy WASPAS employs fuzzy values to capture this vagueness. This enables a more realistic representation of preferences and better models human judgment. The fuzzy extension makes the WASPAS method particularly suitable for complex decision-making environments, such as supply chain management, project selection, and resource allocation, where uncertainty is a significant factor. Using the strengths of both fuzzy logic and the hybrid aggregation mechanism, Fuzzy WASPAS provides a powerful tool for solving intricate MCDM problems.

Choquet \cite{Choquet} introduced the concept of capacity and the Choquet integral, an extension of the Lebesgue integral. When defined using an additive measure, Choquet integral simplifies to a weighted arithmetic mean. Building on this, Sugeno \cite{Sugeno} expanded the notion of capacity to fuzzy measure and developed the Sugeno integral. Unlike additive measures, fuzzy measures  can better capture the interaction among criteria, making fuzzy integrals particularly effective in such contexts. Fuzzy measures and fuzzy integrals are frequently used tools in decision making and pattern recognition problems (\cite{Olgun, Unver,Unver2}). Among various fuzzy measures, the $\lambda$-fuzzy measure, defined by Sugeno \cite{Sugeno}, is notably significant. This measure is constructed using interaction indices and element weights. Takahagi \cite{Takahagi1} proposed three algorithms to efficiently generate $\lambda$-fuzzy measure, accounting for both interaction indices and weight allocations. Further extending these ideas, Tan and Chen \cite{Tan} introduced the IF Choquet integral operator, designed to aggregate IFVs for  MCDM in place of traditional operators reliant on additive measures. Later, Peng and Yang \cite{PengY} generalized the Choquet integral operator from IFVs to Pythagorean fuzzy values (PFVs), presenting MABAC method based on PF Choquet integral operator. Bozyiğit et al. \cite{Bozyiğit2} proposed C-PF Choquet integral operators, which generalize some weighted arithmetic aggregation operators, to aggregate circular Pythagorean fuzzy values (C-PFVs).

This paper offers four significant contributions:

\begin{itemize}
    \item  \textbf{Theoretical contributions:} To improve the flexibility of the fixed radius degree in C-IFSs, this paper introduces the concept of disc intuitionistic fuzzy sets (D-IFSs), which incorporate a radius function. D-IFSs generalize C-IFSs by providing a more expansive framework. Following this, set operations specific to D-IFSs are introduced. Additionally, algebraic operations on disc intuitionistic fuzzy values (D-IFVs) are presented to extend the radius degree \(\sqrt{2}\), which is restricted to 1 in circular intuitionistic fuzzy values (C-IFVs) that overlap with D-IFVs. Weighted aggregation operators are then introduced for aggregating D-IFVs. However, since these aggregation operators do not account for interactions between criteria and such interactions may exist in real-world problems, disc intuitionistic fuzzy Choquet integral operators are proposed in the study. These operators use fuzzy measures to consider the interaction between criteria.

\item  \textbf{Methodological Contributions:} CASPAS (Choquet Aggregated Sum Product Assessment) method introduces a novel approach to multi-criteria decision-making by incorporating  Choquet Sum Model (CSM) and Choquet Product Model (CPM), replacing the traditional WSM and WPM used in WASPAS. These innovations employ the Choquet integral to account for interactions and dependencies among criteria, addressing limitations in additive and multiplicative aggregation methods. By integrating a capacity function, CSM and CPM enable more nuanced evaluations, capturing synergies and conflicts between criteria. Compared to WASPAS, CASPAS offers superior accuracy, adaptability, and applicability, marking a significant methodological contribution to decision-making in complex interdependent systems.

\item  \textbf{Applied Contributions:} This study examines a MCDM problem related to the selection of solar panels based on the criteria of efficiency, cost, durability, and installation complexity. The proposed CASPAS method is applied to evaluate some solar panels, demonstrating its capability to handle interdependent criteria effectively. Following the application, comparison analysis, sensitivity analysis, and validity analysis are conducted to assess the method's robustness and reliability. These analyses highlight the practical relevance of CASPAS in solar panel selection and its potential to deliver accurate and consistent decision-making outcomes in real-world applications.

\item  \textbf{Interdisciplinary Contributions:} Using  CASPAS method for solar panel selection, this study brings together multiple disciplines to offer a new approach to decision-making processes. It integrates engineering, energy management, economics, and environmental sciences to effectively evaluate technical, economic, environmental, and installation-related factors in solar panel selection. The use of CASPAS provides an innovative contribution to existing methods in engineering and energy planning by considering the interactions between criteria in MCDM processes. Moreover, this study enables the development of an interdisciplinary approach for sustainable energy solutions and environmental impact assessments. Thus, it allows for a more effective and efficient selection of solar energy systems from both technical engineering and environmental economics perspectives.

\end{itemize}

The structure of the paper is as follows: Section \ref{litre} presents a comprehensive literature review covering C-IFSs, existing extended WASPAS methods, and various types of solar panels. It also introduces the theoretical foundation of the proposed CASPAS method. Section \ref{difs} focuses on the development of fundamental set operations for D-IFSs and the definition of score and accuracy functions for ranking D-IFVs. Additionally, algebraic operations for D-IFVs are formulated by extending the radius limit from 1 to \(\sqrt{2}\). Using these algebraic operations, weighted arithmetic and geometric aggregation operators for D-IFVs are established.  
In Section \ref{choquett}, after recalling the fundamental concepts of fuzzy measures and the Choquet integral, disc intuitionistic fuzzy Choquet interval operators are introduced. These operators extend the weighted arithmetic and geometric aggregation operators by incorporating interactions between criteria, addressing challenges commonly found in real-world problems. Section \ref{appsolar} begins with an overview of solar panel types and subsequently proposes CASPAS method for D-IFSs. This method advances the WASPAS approach by integrating CSM and CPM techniques.  
Section \ref{performance} explores the sensitivity of the CASPAS method to various parameters and provides  validity analysis. Section \ref{comparative} evaluates the performance of CASPAS through comparisons with existing methods in the literature. The paper concludes with Section \ref{confut}, which discusses the findings, draws conclusions, and outlines directions for future research.

\section{Literature research} \label{litre}
This section provides an overview of existing studies and methodologies related to C-IFSs, the types of solar panels and extended WASPAS approaches.

\subsection{A comparison of C-IFSs and D-IFSs}
C-IFSs are an extension of IFS, designed to better express uncertainty and fuzziness. In C-IFSs, each element is represented as a circle, where the center of the circle indicates the membership and non-membership degrees. The radius of the circle represents the hesitation degree. C-IFSs are particularly useful in MCDM, allowing for more flexible modeling of uncertain information. C-IFSs contain more information compared to IFSs, making them effective in solving more complex problems. Atanassov and Marinov  \cite{Atanassov3} introduced the concept of distance measures for C-IFS and proposed four different distance measures based on the classical metric definition. For decision-making applications, Chen \cite{Chen} proposed Minkowski distance measures specifically designed for C-IFSs. Bozyiğit and Ünver \cite{Bozyiğit1} developed score and accuracy functions to rank C-IFVs. They also introduced parametric distance measures based on Hamming and Euclidean metrics for C-IFVs.  Garg et al. \cite{Garg} proposed algebraic operations for C-IFVs, and based on these operations, they introduced weight aggregation operators based on Archimedean t-norms for C-IFVs. Using these aggregation operators, they developed an extended EDAS method, which is a MCDM approach.

In C-IFSs, the radius degree for each element is treated as a constant, which imposes a limitation. To address this restriction, the concept of D-IFSs is introduced in this study, where the radius is variable rather than fixed. In D-IFSs, the radius is represented by a radius function, similar to the membership and non-membership functions. This provides flexibility in the radius degree for application areas such as decision-making, pattern recognition, and classification. Figure \ref{compc-ıfs} displays a visual comparison of C-IFSs and D-IFSs. 
As shown in Figure \ref{compc-ıfs}, the radius remains constant in a C-IFS, whereas in a D-IFS, the radius can vary.

\begin{figure}[h!]
    \centering
    \includegraphics[scale=0.35]{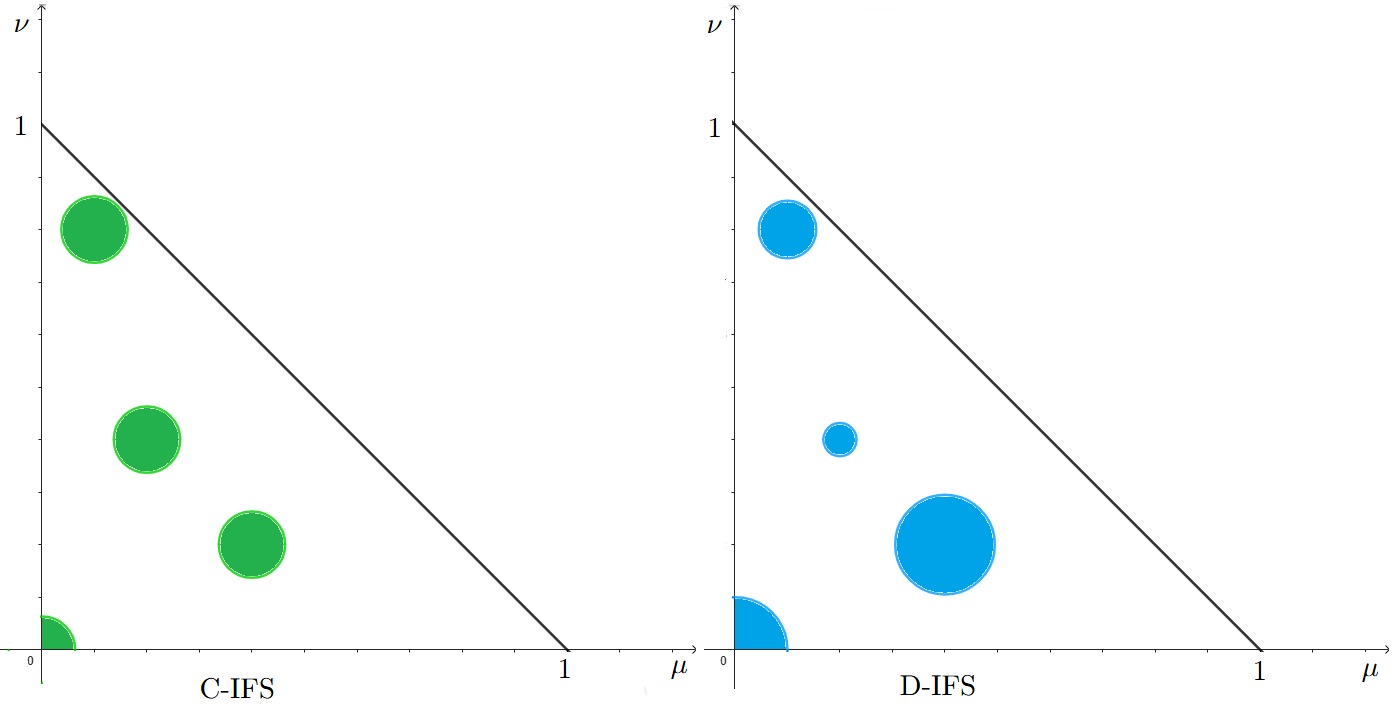}
\caption{Comparison of C-IFSs and D-IFSs}
    \label{compc-ıfs}
\end{figure}

\subsection{A comparison of WASPAS and CASPAS approaches}

 WASPAS method is a well-established MCDM approach that combines the strengths of WSM and WPM. By integrating these two techniques, WASPAS offers a balance between simplicity and computational efficiency. However, a key limitation of WASPAS lies in its assumption of independence among criteria, which may not hold true in real-world scenarios where criteria often exhibit interdependencies. This limitation can lead to suboptimal or less accurate results in complex decision-making problems. WASPAS method has been extensively studied in the literature due to its simplicity and effectiveness in handling MCDM problems. In Table \ref{Tbsas}, a summary of key studies on WASPAS and its applications is provided, illustrating the various areas where the method has been applied and the contributions made in the literature. 
 
 CASPAS, a novel approach, addresses this limitation by employing Choquet integrals through its CSM and CPM. These models allow CASPAS to account for interdependencies among criteria, enabling it to capture the interactions that influence the decision-making process more accurately. By incorporating these interdependencies, CASPAS enhances the realism and reliability of the results, making it particularly suitable for applications with complex criteria relationships.
Furthermore, CASPAS introduces greater flexibility in the aggregation process by providing a mechanism to model the influence of each criterion based on its interaction with others. This feature not only improves the accuracy of the method but also extends its applicability to a broader range of decision-making scenarios. As a result, CASPAS emerges as a superior alternative to WASPAS in contexts where the relationships between criteria play a critical role, such as renewable energy selection, financial analysis, and engineering design problems. Figure \ref{wascas} displays a table comparing the features of WASPAS and CASPAS. Figure \ref{fig-abng} also illustrates the stages and steps involved in the CASPAS method.

\footnotesize
\begin{table}[h]
	\begin{center}
    \renewcommand{\arraystretch}{2}
	\renewcommand{\tabcolsep}{6pt}
	\begin{tabular}
		[c]{lm{4cm}lm{4cm}}\hline
		 \textbf{Year} & \textbf{Environment}& \textbf{Author(s)}  &  \textbf{Application area}
		\\ 
		\hline
2012 & Crips & Zavadskas et al. \cite{Zavadskas1} & Illustrative example \\

2014 & Interval valued intuitionistic fuzzy sets & Zavadskas et al. \cite{Zavadskas2} & Derelict buildings’ redevelopment decisions and investment\\

2015 &  Fuzzy sets & Turskis et al. \cite{Turskis} & Construction site selection\\

2016 &  Interval type-2 fuzzy sets & Keshavarz Ghorabaee et al. \cite{Keshavarz} & Green supplier selection\\

2017 &  Single-valued
neutrosophic set & Baušys and Juodagalvienė \cite{Baušys} & Garage location selection\\

2018 &  Intuitionistic fuzzy sets & Stanujkić and Karabašević \cite{Stanujkić} & Website evaluation\\

2019 &  Spherical fuzzy sets & Kutlu Gundogdu and Kahraman \cite{Kutlu} & Industrial robot selection\\

2020 &   $q$-rung orthopair fuzzy sets & Rani and Mishra \cite{Rani} &  Fuel technology selection\\

2021 &   Fermatean fuzzy  sets &  Mishra and Rani \cite{Mishra} &  Healthcare waste disposal location selection\\

2022 &  Picture fuzzy sets & Senapati and Chen \cite{Senapati} & Air-conditioning system selection\\

2023 &  Pythagorean fuzzy  Sets &  Yalcin Kavus et al.  \cite{Yalcin} & Parcel locker location selection\\

2024 &  Complex Fuzzy Sets & Khan et al. \cite{Khan1} & Parameter Selection Impacting
Software Reliability\\ \hline
  
	\end{tabular}
\caption{Literature survey in different  fuzzy settings  for WASPAS approach} \label{Tbsas}	
\end{center}	
\end{table}
\normalsize

\begin{figure}[h!]
    \centering
    \includegraphics[scale=0.31]{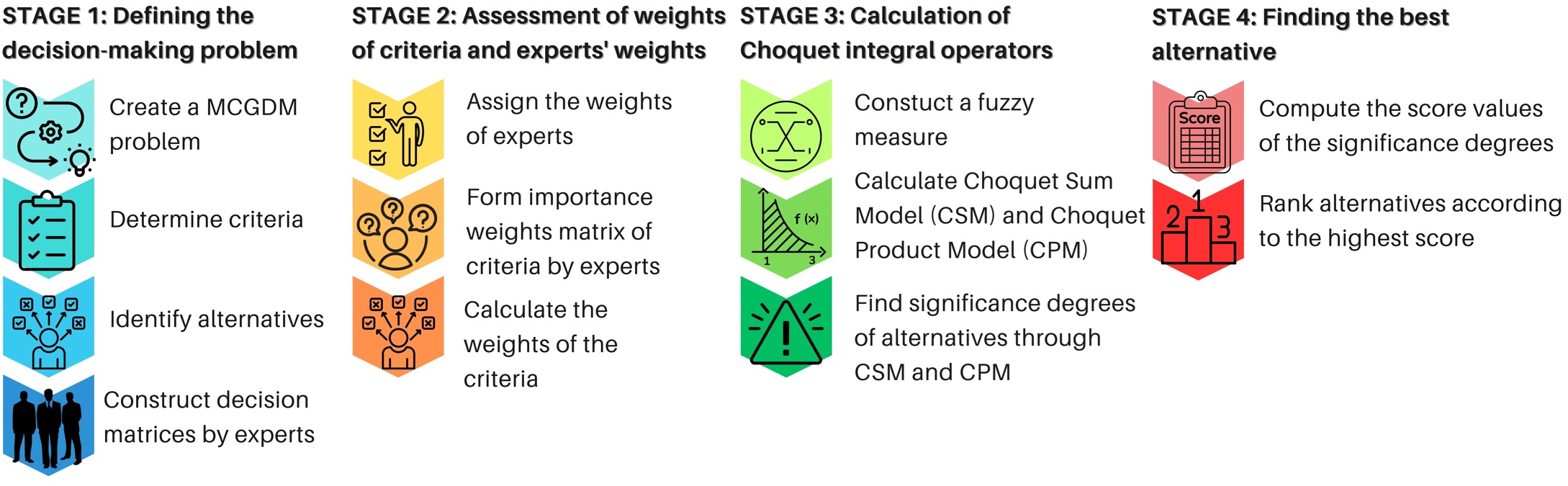}
\caption{Flowchart of  CASPAS method}
    \label{fig-abng}
\end{figure}

\begin{figure}[h!]
    \centering
    \includegraphics[scale=0.31]{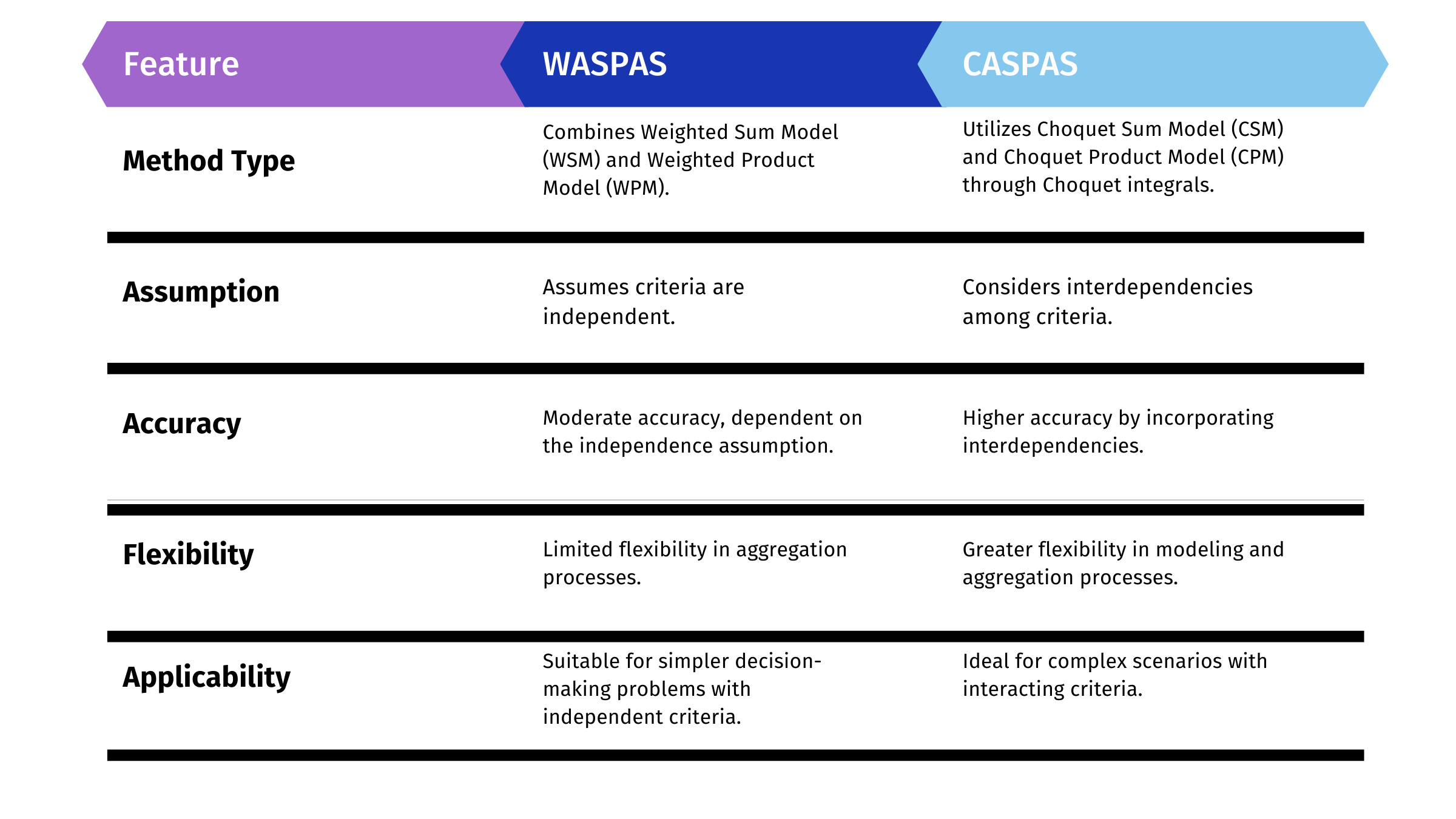}
\caption{Comparison of WASPAS and CASPAS}
    \label{wascas}
\end{figure}

\subsection{Types of solar panels}
Solar panels are advanced systems designed to harness sunlight and convert it into electrical energy, making them a cornerstone of clean and sustainable energy solutions. With rising energy demands and increasing environmental concerns, solar technology has seen significant advancements, resulting in the creation of various panel types tailored to diverse applications \cite{Novas}. Each type offers distinct benefits in terms of efficiency, cost-effectiveness, durability, and suitability for specific uses. The most commonly used types of solar panels are as follows:
\begin{itemize}
\item \textbf{Monocrystalline silicon panels}: They are widely known for their high efficiency, typically ranging from 18\% to 22\%. Made from a single, continuous silicon crystal, these panels are ideal for limited spaces as they generate more energy per square meter. They are more expensive to produce due to their manufacturing process but are highly durable and offer long lifespans, making them a popular choice for residential and commercial installations \cite{Dobrzański, Parida}.

\item \textbf{Polycrystalline silicon panels}:  Polycrystalline panels are less efficient than monocrystalline panels, with efficiency levels around 15\% to 17\%. However, they are more affordable to produce, making them a cost-effective choice for larger installations where space is not a major constraint. These panels are made by melting silicon crystals and forming the panel from the solidified material, which results in a lower efficiency compared to monocrystalline panels \cite{Dobrzański, Parida}.

\item \textbf{Thin-film solar Panels}: Thin-film solar panels are lightweight and flexible, making them ideal for applications where traditional panels might not be suitable. These panels have lower efficiency (typically 10\% to 13\%) but are cheaper to manufacture. Thin-film panels can be used on a variety of surfaces, including curved or irregular structures, and are often used for large-scale projects or specific architectural designs. Thin-film solar cells are preferred due to their low material requirements and improving efficiency levels. The primary technologies in this category are amorphous silicon ($\alpha$-Si), copper indium gallium selenide (CIGS), and cadmium telluride (CdTe) \cite{Lee}.

\item \textbf{PERC (Passivated Emitter and Rear Cell)  solar panels:} PERC panels are an improved version of monocrystalline and polycrystalline panels. They feature an additional layer on the rear side of the cell that reflects unabsorbed light back into the panel, increasing efficiency. This technology boosts energy conversion, particularly in low-light conditions or on cloudy days \cite{Green}.

\item \textbf{Bifacial solar panels:} These panels represent an innovative advancement in photovoltaic technology, capable of generating electricity from both sides of the panel. Unlike conventional monofacial panels, which capture sunlight only on the front, bifacial panels feature solar cells on both the front and rear. This dual-sided design enables them to harness light not only directly from the sun but also from light reflected off the ground or nearby surfaces, greatly enhancing their overall energy production \cite{Guerrero-Lemus}.

\end{itemize}

Each of these panel types offers unique benefits, depending on factors such as material, efficiency, cost, and installation requirements.

Decision-making problems associated with solar energy have been extensively explored in the literature. Notably, the selection of solar panel types and determining optimal locations for solar energy production are among the most significant decision-making issues. Table \ref{solarenergy} provides an overview of studies focusing on decision-making in solar energy.

\footnotesize
\begin{table}[h]
	\begin{center}
    \renewcommand{\arraystretch}{2}
	\renewcommand{\tabcolsep}{6pt}
	\begin{tabular}
		[c]{lm{5cm}lm{4cm}}\hline
		 \textbf{Author(s)} & \textbf{Method}& \textbf{Solar energy application}
		\\ 
		\hline
 Bozyiğit et al. \cite{Bozyiğit} & A MCDM method based on aggregation operators and cosine similarity measures in C-PFSs & Selecting solar cell \\

  El-Bayeh et al. \cite{El-Bayeh} & MCDM algorithm based on the concept of Rank-Weight-Rank& Solar panels selection in buildings\\
  
 Hosouli and Hassani \cite{Hosouli} & Fuzzy Analytic Hierarchy Process method & Solar plant location selection\\

 Kannan et al. \cite{Kannan} &    A hybrid approach based on MCDM methods and Monte Carlo simulation &  Sustainable evaluation of potential solar sites \\
 
 Kaur et al. \cite{Kaur} &   TOPSIS based on entropy technique & Selection of solar panel \\

 Shayani Mehr et al. \cite{Shayani} &  A BWM‐MULTIMOOSRAL approach &  Solar panel technology selection \\

Tüysüz and Kahraman \cite{Tüysüz} &    An integrated picture fuzzy Z-AHP \& TOPSIS methodology & Solar panel selection \\

 \hline 
	\end{tabular}
\caption{Some MCDM studies from the literature on solar energy} \label{solarenergy}	
\end{center}	
\end{table}
\normalsize

\section{Disc Intuitionistic Fuzzy Sets} \label{difs}
Atanassov \cite{Atanassov1} introduced the concept of  IFS, which is an extension of fuzzy sets FS. Throughout this paper $\Delta$ denotes a non-empty finite set.

\begin{definition} \cite{Atanassov1}
 An IFS $I$ in $\Delta$ is defined as:
\[
I = \{\langle \delta_i, \mu_I(\delta_i), \nu_I(\delta_i)\rangle: \delta_i \in \Delta\},
\]
where $\mu_I$ and $\nu_I$ are membership and non-membership functions, respectively with the condition $ \mu_I(\delta_i) + \mu_I(\delta_i) \leq 1$
for every $\delta_i \in \Delta$.
Additionally, an IFV $\theta$ is defined as a pair $\langle\mu_\theta,\nu_\theta \rangle$, where $\mu_\theta$ and $\nu_\theta$ are both within the range $[0, 1]$ and satisfy the condition $\mu_\theta+\nu_\theta\leq 1$. 
\end{definition}

C-IFSs, which was introduced by Atanassov \cite{Atanassov2}, are an extension of IFSs. C-IFSs provide decision makers with a means to express uncertainty by utilizing membership and non-membership degrees represented in the form of a circle. In C-IFSs, the membership and non-membership degrees are visually represented by a circle, allowing decision makers to intuitively understand the degrees of membership and non-membership associated with elements in the set. 

\begin{definition}\cite{Atanassov2,Atanassov3}
    Let $r\in[0,\sqrt{2}]$. A C-IFS $C_r$ in $\Delta$ is defined as:
\[
C_r = \{\langle \delta_i, (\mu_C(\delta_i), \nu_C(\delta_i)); r\rangle: \delta_i \in \Delta\},
\]
where $\mu_C$ and $\nu_C$ are membership and non-membership functions, respectively with the condition $\mu_C(\delta_i) + \nu_C(\delta_i) \leq 1$
for every $\delta_i \in \Delta$. The value $r$ represents the radius of the circle centered at the point $(\mu_C(\delta_i),\nu_C(\delta_i))$ on the plane. This circle visually represents the membership degree and non-membership degree associated with $\delta_i \in \Delta$. 
\end{definition}

\begin{definition}
    A D-IFS $D_r$ in $\Delta$ is defined as:
\[
D_{\hat{r}} = \{\langle \delta_i, (\mu_D(\delta_i), \nu_D(\delta_i)); \hat{r}_D(\delta_i)\rangle: \delta_i \in \Delta\},
\]
where $\mu_D$ and $\nu_D$ are membership and non-membership functions, respectively with the condition $\mu_D(\delta_i) + \nu_D(\delta_i) \leq 1$
for every $\delta_i \in \Delta$ and $r_D$ is  a radius function with $r_D(\delta_i)\in [0,\sqrt{2}]$ for every $\delta_i\in \Delta$. The value $r_D(\delta_i)$ represents the radius of the circle centered at the point $(\mu_D(\delta_i),\nu_D(\delta_i))$ on the plane. This circle visually represents the membership degree and non-membership degree associated with $\delta_i\in \Delta$.   
\end{definition}

\begin{definition}
Consider two values $\mu_\theta$ and $\nu_\theta$  within the range $[0,\sqrt{2}]$, satisfying the condition $\mu_\theta+\nu_\theta \leq 1$, and a value $r_\theta$ within the range $[0,1]$. The triple $\theta = \langle (\mu_\theta, \nu_\theta); r_\theta\rangle$ is called both C-IFV and D-IFV. That is, C-IFV and D-IFV coincide.    
\end{definition}

\begin{remark} 
Every C-IFS $C_r$ can be expressed as a special case of a D-IFS by assigning a constant value $r$ to the $\hat{r}_D$ radius function. In other words, we can write any C-IFS $C_r$ as:
\[
C_r = D_{\hat{r}=r}= \{\langle \delta_i, (\mu_D(\delta_i), \nu_D(\delta_i)); r\rangle: \delta_i \in \Delta\}.
\]
This implies that every C-IFS is also a D-IFS. Each D-IFS may not necessarily be a C-IFS. Hence, D-IFS is considered as an extension of the C-IFS framework.   
\end{remark}

The following definition outlines the set operations defined for D-IFSs. 

\begin{definition}  \label{subset}
    Consider two D-IFSs, $D_{\hat{r}}$ and $E_{\hat{r}}$ and they are defined as follows:
\[
D_{\hat{r}} = \{\langle \delta_i, (\mu_D(\delta_i,), \nu_D(\delta_i,)); {\hat{r}}_D(\delta_i) \rangle : \delta_i \in \Delta\} \mbox{ and } E_{\hat{r}} = \{\langle \delta_i, (\mu_E(\delta_i), \nu_E(\delta_i)); {\hat{r}}_E(\delta_i) \rangle : \delta_i \in \Delta\}.
\]
Some set operations between two D-IFSs can be defined as follows:

\textbf{i)} $D_{\hat{r}} \subset E_{\hat{r}}$ if and only if ${\hat{r}_D}(\delta_i) \leq {\hat{r}_E}(\delta_i)$, $\mu_D(\delta_i) \leq \mu_E(\delta_i)$, and $\nu_D(\delta_i) \geq \nu_E(\delta_i)$ for each $\delta_i \in \Delta$.

\textbf{ii)} $D_{\hat{r}} = E_{\hat{r}}$ if and only if $D_{\hat{r}} \subset E_{\hat{r}}$ and $E_{\hat{r}} \subset D_{\hat{r}}$.

\textbf{iii)}  The complement of $D_{\hat{r}}$ is denoted as $D^c_{\hat{r}}$ and is defined as $D^c_{\hat{r}} = \{\langle \delta_i, (\nu_D(\delta_i), \mu_D(x_i)); {\hat{r}}_D(\delta_i) \rangle : \delta_i \in \Delta\}.$

\end{definition}

We propose an approach to ranking two DIFVs using both the score function and the accuracy function. These functions are outlined as follows.

\begin{definition}\label{scoref}
    Let $\theta=\langle (\mu_\theta,\nu_\theta);r_\theta\rangle$ represent a D-IFV (or C-IFV) and let $\xi \in [0,1]$. A score function $\mathcal{S}$ and the accuracy function $\mathcal{H}$ are defined for D-IFVs as 
    \[
\mathcal{S}(\theta)=\xi \Bigg( \frac{\mu_\theta-\nu_\theta+1}{2}\Bigg)+(1-\xi)\frac{r_\theta}{\sqrt{2}}
    \]
    and 
    \[
    \mathcal{H}(\theta)=\xi( \mu_\theta+\nu_\theta)+(1-\xi)\frac{r_\theta}{\sqrt{2}},
    \]
respectively. One can see that $0\leq\mathcal{S}(\theta)\leq 1$ and $0\leq\mathcal{H}(\theta)\leq 1$. By altering  $\xi\in [0,1]$, a decision maker has the ability to control how much intuitionistic fuzzy information and radius information influence the score function's outcome. When $\xi$ equals $1$, the the degree of radius is effectively excluded from the decision-making process.
\end{definition}

\begin{definition}
Consider two D-IFVs, $\theta=\langle (\mu_\theta,\nu_\theta);r_\theta\rangle$ and $\kappa=\langle (\mu_\kappa,\nu_\kappa);r_\kappa\rangle$. $\theta$ and $\kappa$  can be ranked by the following procedure:
\begin{itemize}
\item \textbf{(1)} If $\mathcal{S}(\theta)<\mathcal{S}(\kappa)$, then $\theta$ is ranked lower than $\kappa$, showed by $\theta \preceq \kappa$.
\item \textbf{(2)} If $\mathcal{S}(\theta)=\mathcal{S}(\kappa)$,
\begin{itemize}
\item \textbf{(2a)} If $\mathcal{H}(\theta)<\mathcal{H}(\kappa)$, then $\theta$ is ranked lower than $\kappa$.
\item \textbf{(2b)} If $\mathcal{H}(\theta)=\mathcal{H}(\theta)$, then $\theta$ is equivalent to $\kappa$, showed by $\theta \equiv \kappa$.
\end{itemize}
\end{itemize}
\end{definition}

Some algebraic operations defined for C-IFVs are also valid for D-IFVs \cite{Garg}. We extend the existing algebraic operations, which are also valid for C-IFVs, by considering that the radius lies between $0$ and $\sqrt{2}$.

\begin{definition}\label{algebraic}
   Let $\theta=\langle (\mu_\theta,\nu_\theta);r_\theta\rangle$ and $\kappa=\langle (\mu_\kappa,\nu_\kappa);r_\kappa\rangle$ represent two D-IFVs (or C-IFVs) and $\zeta>0$. We have some algebraic operations based on radius between D-IFVs as follows:

   \textbf{a)} $\theta \oplus_{q} \kappa=\displaystyle{\Big\langle \Big(1-(1-\mu_\theta)(1-\mu_\kappa),\nu_\theta\nu_\kappa\Big); \frac{r_\theta r_\kappa}{\sqrt{2}}}\Big\rangle$ 

\textbf{b)} $\theta \oplus_{p} \kappa=\displaystyle{\Big\langle \Big(1-(1-\mu_\theta)(1-\mu_\kappa),\nu_\phi\nu_\kappa\Big);\sqrt{2}-\sqrt{2}\Big(1-\frac{r_\theta}{\sqrt{2}}\Big)\Big(1-\frac{r_\kappa}{\sqrt{2}}\Big)}\Big\rangle$ 

\textbf{c)} $\theta \otimes_{q} \kappa=\displaystyle{\Big\langle \big(\mu_\theta\mu_\kappa,1-(1-\nu_\theta)(1-\nu_\kappa)\big);\frac{r_\theta r_\kappa}{\sqrt{2}}}\Big\rangle$ 

\textbf{d)} $\theta \otimes_{p} \kappa=\displaystyle{\Big\langle \big(\mu_\theta\mu_\kappa,1-(1-\nu_\theta)(1-\nu_\kappa)\big);\sqrt{2}-\sqrt{2}\Big(1-\frac{r_\theta}{\sqrt{2}}\Big)\Big(1-\frac{r_\kappa}{\sqrt{2}}\Big)} \Big\rangle$

\textbf{e)} $\zeta_q\theta=\displaystyle{\Big\langle( 1-(1-\mu_\theta)^\zeta,\nu_\theta^\zeta);\sqrt{2}\Big(\frac{r_\theta}{\sqrt{2}}\Big)^\zeta}\Big\rangle$

\textbf{f)} $\zeta_p\theta=\displaystyle{\Big\langle (1-(1-\mu_\theta)^\zeta,\nu_\theta^\zeta);\sqrt{2}-\sqrt{2}\Big(1-\frac{r_\theta}{\sqrt{2}}\Big)^\zeta}\Big\rangle$

\textbf{g)} $\theta^{\zeta_q}=\displaystyle{\Big\langle (\mu_\theta^\zeta,1-(1-\nu_\theta)^\zeta);\sqrt{2}\Big(\frac{r_\theta}{\sqrt{2}}\Big)^\zeta}\Big\rangle$

\textbf{h)} $\theta^{\zeta_p}=\displaystyle{\Big\langle (\mu_\theta^\zeta,1-(1-\nu_\theta)^\zeta);\sqrt{2}-\sqrt{2}\Big(1-\frac{r_\theta}{\sqrt{2}}\Big)^\zeta}\Big\rangle$
 \end{definition}

\begin{definition} \label{dıfwa} 
    Consider a collection of D-IFVs denoted as $\{\theta_k=\langle (\mu_{\theta_k},\nu_{\phi_k});r_{\theta_k}\rangle:k=1,\ldots,m\}$.  Weighted arithmetic aggregation operators are mappings defined as
    \[
\mbox{D-IFWAO}_q(\theta_1,...,\theta_m):=(q)\bigoplus_{k=1}^m {\omega_{k_q}}\theta_k.
\]
and 
\[
\mbox{D-IFWAO}_p(\theta_1,...,\theta_m):=(p)\bigoplus_{k=1}^m {\omega_{k_p}}\theta_k.
\]
respectively. Here, each weight $\omega_i$ satisfies $0\leq \omega_i\leq 1$ for $k=1,\ldots,m$, and the sum of all weights equals 1.
\end{definition}
It can be easily obtained from Definition \ref{dıfwa}  that
   \begin{eqnarray*}
\mbox{\textit{D-IFWAO}}_q(\theta_1,\ldots,\theta_m)&=&\Bigg\langle \Bigg( 1-\prod_{k=1}^m(1-\mu_{\theta_{k}})^{\omega_k}, \prod_{k=1}^m\nu_{\theta_k}^{\omega_k}\Bigg); \prod_{k=1}^m r_{\theta_k}^{\omega_k} \Bigg\rangle,
\end{eqnarray*}
and
\begin{eqnarray*}
\mbox{\textit{D-IFWAO}}_p(\theta_1,\ldots,\theta_m)&=&\Bigg\langle \Bigg(1-\prod_{k=1}^m(1-\mu_{\theta_{k}})^{\omega_k}, \prod_{k=1}^m \nu_{\theta_k}^{\omega_k}\Bigg); \sqrt{2}-\prod_{k=1}^m\Big(\sqrt{2}-r_{\theta_{k}}\Big)^{\omega_k} \Bigg\rangle,
\end{eqnarray*}
respectively. 

\begin{definition}\label{dıfwg}  Given a collection of D-IFVs denoted as $\{\theta_k=\langle (\mu_{\theta_k},\nu_{\theta_k});r_{\theta_k}\rangle:k=1,\ldots,m\}$.
 Weighted geometric aggregation operators are mappings defined as
\[
\mbox{D-IFWGO}_q(\theta_1,\ldots,\theta_m):=(q)\bigotimes_{k=1}^m\theta_k^{\omega_{k_q}}.
\]
and
\[
\mbox{D-IFWGO}_p(\theta_1,...,\theta_m):=(p)\bigotimes_{k=1}^m \theta_k^{\omega_{k_p}}.
\]
respectively. Here, $0\leq \omega_k\leq 1$ for any $k=1,\ldots,m$, and the weights satisfy the condition $\sum_{k=1}^m w_k=1$.   
\end{definition}

It can be seen from Definition \ref{dıfwg} that
\begin{eqnarray*}
\mbox{\textit{D-IFWGO}}_q(\theta_1,\ldots,\theta_m)&=&\Bigg\langle \Bigg( \prod_{k=1}^m \mu_{\theta_k}^{\omega_k}, 1-\prod_{k=1}^m(1-\nu_{\theta_{k}})^{\omega_k} \Bigg) ; \prod_{k=1}^m r_{\theta_k}^{\omega_k} \Bigg\rangle,
\end{eqnarray*}
and
\begin{eqnarray*}
\mbox{\textit{D-IFWGO}}_p(\theta_1,\ldots,\theta_m)&=&\Bigg\langle \Bigg( \prod_{k=1}^m  \mu_{\theta_k}^{\omega_k}, 1-\prod_{i=1}^m(1-\nu_{\theta_{k}})^{\omega_k} \Bigg) ; \sqrt{2}-\prod_{k=1}^m\Big(\sqrt{2}-r_{\theta_{k}}\Big)^{\omega_k} \Bigg\rangle,
\end{eqnarray*}
respectively.

\section{Disc Intuitionistic fuzzy Choquet integral operator} \label{choquett}

This section begins with a review of the fuzzy measure and the Choquet integral, followed by the introduction of Choquet integral operators specifically designed for D-IFVs

\begin{definition} \cite{Sugeno} Let $X$ be a universal set, and let $2^X$ represent the power set of $X$. A fuzzy measure is defined as a set function \(\tau: 2^X \to [0,1]\) satisfying the following properties: 

\textbf{1)} \(\tau(\emptyset) = 0\) and \(\tau(X) = 1\). 

\textbf{2)} For any two subsets \(A, B \subseteq X\), if \(A \subseteq B\), then \(\tau(A) \leq \tau(B)\). \end{definition}

Sugeno \cite{Sugeno} proposed the concept of the $\lambda$-fuzzy measure, which incorporates an interaction index and criterion weighting. This measure is particularly useful for capturing synergies between criteria and offers more accurate solutions in complex decision-making contexts.

\begin{definition} \cite{Sugeno} Let \(X\) be a set, and consider a set function \(\tau: 2^X \to [0,1]\) with \(\lambda \geq -1\). The function \(\tau\) is called a \(\lambda\)-fuzzy measure if it satisfies the following condition for any disjoint subsets \(A\) and \(B\) of \(X\): 
\[
\tau(A \cup B) = \tau(A) + \tau(B) + \lambda \tau(A)\tau(B).
\]
\end{definition}

\begin{definition} 
Let $X$ be a finite set and let $\tau:2^X\to [0,1]$ be a fuzzy measure. The discrete Choquet integral a function $f: X \to [0,1]$ with respect to $\tau$ is defined by
\[
C_\tau(f(x_1),\ldots, f(x_m)):=\sum_{k=1}^m \big[\tau(E_{(k)})-\tau(E_{(k+1)})\big]f(x_{(k)}),
\]
where the sequence $\{x_{(k)}\}_{k=0}^m$ indicates the indices permuted such that $0:=f(x_{(0)})\leq f(x_{(1)})\leq \ldots \leq f(x_{(m)})$, $E_{(k)}:=\{x_{(k)},x_{(k+1)},\ldots,x_{(m)}\}$ and $E_{(m+1)}=\emptyset$.
\end{definition}
We now define some arithmetic aggregation operators.
\begin{definition}
     Consider a collection of D-IFVs denoted as $\{\theta_k=\langle (\mu_{\theta_k},\nu_{\phi_k});r_{\theta_k}\rangle:k=1,\ldots,m\}$.  Disc intuitionistic fuzzy  Choquet arithmetic integral operators with respect to $\tau$ are defined by

     \[
     \mbox{D-IFCAIO}_q^\tau(\theta_1,\ldots,\theta_m)= (q)\bigoplus_{k=1}^m  \big[\tau(F_{(k)})-\tau(F_{(k+1)})\big]_q\theta_{(k)},
     \]
and
\[
      \mbox{D-IFCAIO}_p^\tau(\theta_1,\ldots,\theta_m)= (p)\bigoplus_{k=1}^m  \big[\tau(F_{(k)})-\tau(F_{(k+1)})\big]_p\theta_{(k)},
     \]   
  where the sequence $\{\theta_{(k)}\}_{k=0}^m$ indicates the indices permuted such that $\theta_{(1)}\preceq\theta_{(2)}\preceq\ldots \preceq \theta_{(m)}$, $F_{(k)}:=\{(k),(k+1),\ldots,(m)\}$ and $F_{(m+1)}=\emptyset$.   
\end{definition}
The following theorem formulates the given arithmetic aggregation operators.
\begin{theorem}\label{adifc}
      Consider a collection of D-IFVs denoted as $\{\theta_k=\langle (\mu_{\theta_k},\nu_{\phi_k});r_{\theta_k}\rangle:k=1,\ldots,m\}$. $\mbox{D-IFCAIO}_q^\tau(\theta_1,\ldots,\theta_m)$ and $\mbox{D-IFCAIO}_p^\tau(\theta_1,\ldots,\theta_m)$ are also a D-IFV and can be expressed by

     \begin{eqnarray*}   
    \mbox{D-IFCAIO}_q^\tau(\theta_1,\ldots,\theta_m)&=&\Bigg\langle \Bigg( 1-\prod_{k=1}^m(1-\mu_{\theta_{(k)}})^{\tau(F_{(k)})-\tau(F_{(k+1)})}, \prod_{k=1}^m\nu_{\theta_{(k)}}^{\tau(F_{(k)})-\tau(F_{(k+1)})}\Bigg);{}\\
&& {} \prod_{k=1}^m r_{\theta_{(k)}}^{\tau(F_{(k)})-\tau(F_{(k+1)})} \Bigg\rangle
   \end{eqnarray*} 
and 

\begin{eqnarray*}
\mbox{D-IFCAIO}_p^\tau(\theta_1,\ldots,\theta_m)&=& \Bigg\langle \Bigg(1-\prod_{k=1}^m(1-\mu_{\theta_{(k)}})^{\tau(F_{(k)})-\tau(F_{(k+1)})}, \prod_{k=1}^m \nu_{\theta_{(k)}}^{\tau(F_{(k)})-\tau(F_{(k+1)})}\Bigg);{}\\
&& {} \sqrt{2}-\prod_{k=1}^m\Big(\sqrt{2}-r_{\theta_{(k)}}\Big)^{\tau(F_{(k)})-\tau(F_{(k+1)})} \Bigg\rangle
\end{eqnarray*}
respectively where the sequence $\{\theta_{(k)}\}_{k=0}^m$ indicates the indices permuted such that $\theta_{(1)}\preceq\theta_{(2)}\preceq\ldots \preceq \theta_{(m)}$, $F_{(k)}:=\{(k),(k+1),\ldots,(m)\}$ and $F_{(m+1)}=\emptyset$.     
\end{theorem}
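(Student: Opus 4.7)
The plan is to prove both identities by induction on $m$ using the algebraic operations from Definition \ref{algebraic}, exploiting the telescoping identity that the Choquet weights form a partition of unity. Writing $\alpha_k := \tau(F_{(k)}) - \tau(F_{(k+1)})$ and noting that $F_{(1)} = \{(1),\ldots,(m)\}$ and $F_{(m+1)} = \emptyset$, the defining properties of a fuzzy measure give the telescoping sum $\sum_{k=1}^m \alpha_k = \tau(F_{(1)}) - \tau(F_{(m+1)}) = 1 - 0 = 1$. This partition-of-unity identity is the decisive ingredient that makes the extraneous $\sqrt{2}$-factors cancel in the final radius.

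For the $q$-formula, I would first apply part (e) of Definition \ref{algebraic} to each summand to obtain
\[
\alpha_{k_q}\theta_{(k)} = \Big\langle\big(1-(1-\mu_{\theta_{(k)}})^{\alpha_k},\,\nu_{\theta_{(k)}}^{\alpha_k}\big);\, \sqrt{2}\big(r_{\theta_{(k)}}/\sqrt{2}\big)^{\alpha_k}\Big\rangle,
\]
and then establish by induction on $n$ the auxiliary identity that iterated $\oplus_q$ on arbitrary D-IFVs $\psi_1,\ldots,\psi_n$ yields membership $1-\prod_{k=1}^n(1-\mu_{\psi_k})$, non-membership $\prod_{k=1}^n\nu_{\psi_k}$, and radius $(\sqrt{2})^{-(n-1)}\prod_{k=1}^n r_{\psi_k}$. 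The membership and non-membership parts telescope immediately from repeated use of part (a); the only delicate point is the radius, where each invocation of $\oplus_q$ contributes one factor of $1/\sqrt{2}$. Substituting the scaled values and invoking $\sum\alpha_k = 1$ turns $\sqrt{2}^m \prod_k (r_{\theta_{(k)}}/\sqrt{2})^{\alpha_k}/(\sqrt{2})^{m-1}$ into $\sqrt{2}^{1-\sum\alpha_k}\prod_k r_{\theta_{(k)}}^{\alpha_k} = \prod_k r_{\theta_{(k)}}^{\alpha_k}$, as required.

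For the $p$-formula, the membership and non-membership calculations are identical, since $\oplus_p$ and $\oplus_q$ differ only in the radius rule. For the radius I would use the substitution $s_{\theta_{(k)}} := 1 - r_{\theta_{(k)}}/\sqrt{2}$, which linearises things: under part (f), $\alpha_{k_p}$ sends $s$ to $s^{\alpha_k}$, and under part (b), $\oplus_p$ multiplies the $s$-values (with no stray $\sqrt{2}$-factor). Consequently $s_{\mathrm{final}} = \prod_k(1-r_{\theta_{(k)}}/\sqrt{2})^{\alpha_k}$, and inverting the substitution together with $\sum\alpha_k = 1$ rewrites $\sqrt{2}\prod_k(1-r_{\theta_{(k)}}/\sqrt{2})^{\alpha_k}$ as $\prod_k(\sqrt{2}-r_{\theta_{(k)}})^{\alpha_k}$, yielding the stated expression.

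To close the argument I would verify that the output is a valid D-IFV: the constraint $\mu+\nu\leq 1$ is preserved at every $\oplus$ step (the verification is the same as for $\mbox{D-IFWAO}_q$ and $\mbox{D-IFWAO}_p$), and the radius lies in $[0,\sqrt{2}]$ because it is a weighted geometric mean of values in $[0,\sqrt{2}]$ in the $q$-case and $\sqrt{2}$ minus such a mean in the $p$-case. The main obstacle is purely the careful bookkeeping of the $\sqrt{2}$-prefactors arising from $\zeta_q$ and $\oplus_q$; applying the telescoping identity $\sum_k\alpha_k = 1$ at the precise moment is what forces the cumulative power of $\sqrt{2}$ to collapse to zero and produces the clean closed-form.
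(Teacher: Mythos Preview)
Your proposal is correct and follows essentially the same inductive strategy as the paper's proof: scale each $\theta_{(k)}$ via Definition~\ref{algebraic}(e)/(f), then combine inductively via $\oplus_q$/$\oplus_p$. Your treatment is in fact more explicit than the paper's in two respects---you isolate the telescoping identity $\sum_k\alpha_k=1$ as the reason the stray powers of $\sqrt{2}$ collapse in the $q$-radius, and you handle the $p$-radius via the clean substitution $s=1-r/\sqrt{2}$, whereas the paper simply asserts the $p$-case is analogous---but these are expository refinements rather than a different route.
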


\begin{proof}
   It is clear that $\mbox{\textit{D-IFCAIO}}_q^\tau(\theta_1,\ldots,\theta_m)$ constitutes a D-IFV. The validity of the second part becomes apparent through the application of mathematical induction. When $k=2$, with the help of Definition \ref{algebraic} we have

   \[
   \Big[\tau(F_{(1)})-\tau(F_{(2)})\big]_q\theta_{(1)}=\Bigg\langle\Big(1-(1-\mu_{\theta_{(1)}})^{\tau(F_{(1)})-\tau(F_{(2)})},\nu_{\theta_{(1)}}^{\tau(F_{(1)})-\tau(F_{(2)})}\Big);\sqrt{2}\Big(\frac{r_{\theta_{(1)}}}{\sqrt{2}}\Big)^{\tau(F_{(1)})-\tau(F_{(2)})}\Bigg\rangle,
   \]
and
 \[
   \Big[\tau(F_{(2)})-\tau(F_{(3)})\big]_q\theta_{(2)}=\Bigg\langle\Big(1-(1-\mu_{\theta_{(2)}})^{\tau(F_{(2)})-\tau(F_{(3)})},\nu_{\theta_{(2)}}^{\tau(F_{(2)})-\tau(F_{(3)})}\Big);\sqrt{2}\Big(\frac{r_{\theta_{(2)}}}{\sqrt{2}}\Big)^{\tau(F_{(2)})-\tau(F_{(3)})}\Bigg\rangle.
   \]
Since $F_{(3)}=\emptyset$, we obtain
\begin{eqnarray*}
    \mbox{\textit{D-IFCAIO}}_q^\tau(\theta_1,\theta_2)&=& \Big[\tau(F_{(2)})-\tau(F_{(3)})\big]_p\theta_{(2)} \oplus_p \Big[\tau(F_{(2)})-\tau(F_{(3)})\big]_p\theta_{(2)}\\
    &=& \Bigg\langle\Big(1-(1-\mu_{\theta_{(1)}})^{\tau(F_{(1)})-\tau(F_{(2)})}(1-\mu_{\theta_{(2)}})^{\tau(F_{(2)})-\tau(F_{(3)})},{}\\
&& {}\nu_{\theta_{(1)}}^{\tau(F_{(1)})-\tau(F_{(2)})}\nu_{\theta_{(2)}}^{\tau(F_{(2)})-\tau(F_{(3)})}\Big);\sqrt{2}\Big(\frac{r_{\theta_{(1)}}}{\sqrt{2}}\Big)^{\tau(F_{(1)})-\tau(F_{(2)})}\Big(\frac{r_{\theta_{(2)}}}{\sqrt{2}}\Big)^{\tau(F_{(2)})-\tau(F_{(3)})}\Bigg\rangle\\
&=& \Bigg\langle\Big(1-(1-\mu_{\theta_{(1)}})^{\tau(F_{(1)})-\tau(F_{(2)})}(1-\mu_{\theta_{(2)}})^{\tau(F_{(2)})-\tau(F_{(3)})},{}\\
&& {}\nu_{\theta_{(1)}}^{\tau(F_{(1)})-\tau(F_{(2)})}\nu_{\theta_{(2)}}^{\tau(F_{(2)})-\tau(F_{(3)})}\Big);r_{\theta_{(1)}}^{\tau(F_{(1)})-\tau(F_{(2)})}r_{\theta_{(2)}}^{\tau(F_{(2)})-\tau(F_{(3)})}\Bigg\rangle
\end{eqnarray*}
Now suppose that the expression
\begin{eqnarray*}
\mbox{\textit{D-IFCAIO}}_q^\tau(\theta_1,\ldots,\theta_n)&=&\Bigg\langle \Bigg( 1-\prod_{k=1}^n(1-\mu_{\theta_{(k)}})^{\tau(F_{(k)})-\tau(F_{(k+1)})}, \prod_{k=1}^n\nu_{\theta_{(k)}}^{\tau(F_{(k)})-\tau(F_{(k+1)})}\Bigg);{}\\
&& {} \prod_{k=1}^n r_{\theta_{(k)}}^{\tau(F_{(k)})-\tau(F_{(k+1)})} \Bigg\rangle
\end{eqnarray*}
is true for $m=n$. Then for $m=n+1$ it is attain that
\begin{eqnarray*}
 \mbox{\textit{D-IFCAIO}}_q^\tau(\theta_1,\ldots,\theta_n,\theta_{n+1})&=& \mbox{\textit{D-IFCAIO}}_p^\tau(\theta_1,\ldots,\theta_n) \oplus_p \Big[\tau(F_{(n+1)})-\tau(F_{(n+2)})\big]_p\theta_{(n+1)}\\&=&\Bigg\langle \Bigg( 1-\prod_{k=1}^{n+1}(1-\mu_{\theta_{k}})^{\tau(F_{(k)})-\tau(F_{(k+1)})}, \prod_{k=1}^{n+1}\nu_{\theta_k}^{\tau(F_{(k)})-\tau(F_{(k+1)})}\Bigg); {}\\
&& {}\prod_{k=1}^{n+1} r_{\theta_{(k)}}^{\tau(F_{(k)})-\tau(F_{(k+1)})} \Bigg\rangle
\end{eqnarray*}
  Therefore, the proof is finished for $\mbox{\textit{D-IFCAIO}}_q^\tau(\theta_1,\ldots,\theta_m)$. In similar way, one can easily proof for $\mbox{\textit{D-IFCAIO}}_p^\tau(\theta_1,\ldots,\theta_m)$.
\end{proof}
In the following we define some geometric aggregation operators.
\begin{definition}
     Consider a collection of D-IFVs denoted as $\{\theta_k=\langle (\mu_{\theta_k},\nu_{\phi_k});r_{\theta_k}\rangle:k=1,\ldots,m\}$. Disc intuitionistic fuzzy Choquet geometric integral operators with respect to $\tau$ are defined by

     \[
     \mbox{\textit{D-IFCGIO}}_q^\tau(\theta_1,\ldots,\theta_m)= (q)\bigotimes_{k=1}^m  \theta_{(k)}^{\tau(F_{(k)})-\tau(F_{(k+1)})},
     \]
and
\[
     \mbox{\textit{D-IFCGIO}}_p^\tau(\theta_1,\ldots,\theta_m)= (p)\bigotimes_{k=1}^m  \theta_{(k)}^{\tau(F_{(k)})-\tau(F_{(k+1)})},
     \]   
  where the sequence $\{\theta_{(k)}\}_{k=0}^m$ indicates the indices permuted such that $\theta_{(1)}\preceq\theta_{(2)}\preceq\ldots \preceq \theta_{(m)}$, $F_{(k)}:=\{(k),(k+1),\ldots,(m)\}$ and $F_{(m+1)}=\emptyset$.   
\end{definition}
The following theorem formulates the geometric aggregation operators.
\begin{theorem}
      Consider a collection of D-IFVs denoted as $\{\theta_k=\langle (\mu_{\theta_k},\nu_{\theta_k});r_{\theta_k}\rangle:k=1,\ldots,m\}$. $\mbox{\textit{D-IFCGIO}}_q^\tau(\theta_1,\ldots,\theta_m)$ and $\mbox{\textit{D-IFCGIO}}_p^\tau(\theta_1,\ldots,\theta_m)$ are also D-IFV and can be expressed by

      \begin{eqnarray*}
    \mbox{\textit{D-IFCGIO}}_q^\tau(\theta_1,\ldots,\theta_m)&=&\Bigg\langle \Bigg( \prod_{k=1}^m \mu_{\theta_{(k)}}^{\tau(F_{(k)})-\tau(F_{(k+1)})}, 1-\prod_{k=1}^m(1-\nu_{\theta_{(k)}})^{\tau(F_{(k)})-\tau(F_{(k+1)})} \Bigg) ;{}\\
&& {}  \prod_{k=1}^n r_{\theta_{(k)}}^{\tau(F_{(k)})-\tau(F_{(k+1)})} \Bigg\rangle
    \end{eqnarray*}
and 

\begin{eqnarray*}
\mbox{\textit{D-IFCGIO}}_p^\tau(\theta_1,\ldots,\theta_m)&=& \Bigg\langle \Bigg( \prod_{k=1}^m \mu_{\theta_{(k)}}^{\tau(F_{(k)})-\tau(F_{(k+1)})}, 1-\prod_{k=1}^m(1-\nu_{\theta_{(k)}})^{\tau(F_{(k)})-\tau(F_{(k+1)})} \Bigg);{}\\
&& {} \sqrt{2}-\prod_{k=1}^m\Big(\sqrt{2}-r_{\theta_{(k)}}\Big)^{\tau(F_{(k)})-\tau(F_{(k+1)})} \Bigg\rangle
\end{eqnarray*}
respectively where the sequence $\{\theta_{(k)}\}_{k=0}^m$ indicates the indices permuted such that $\theta_{(1)}\preceq\theta_{(2)}\preceq\ldots \preceq \theta_{(m)}$, $F_{(k)}:=\{(k),(k+1),\ldots,(m)\}$ and $F_{(m+1)}=\emptyset$.     
\end{theorem}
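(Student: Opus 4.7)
The plan is to prove this by mathematical induction on $m$, mirroring the proof of Theorem \ref{adifc} but substituting the multiplicative operations $\otimes_q$, $\otimes_p$ and the power operations $\theta^{\zeta_q}$, $\theta^{\zeta_p}$ from Definition \ref{algebraic}(c), (d), (g), (h) in place of their additive counterparts. First I would verify that both right-hand sides are D-IFVs. The membership and non-membership components coincide with those already produced by the weighted geometric aggregation formula following Definition \ref{dıfwg}, so the constraint $\mu+\nu\leq 1$ is inherited. For the radius, writing $\zeta_k := \tau(F_{(k)})-\tau(F_{(k+1)})$ and noting the telescoping identity $\sum_{k=1}^m \zeta_k = \tau(F_{(1)})-\tau(F_{(m+1)}) = 1-0 = 1$, the $q$-case gives $0 \leq \prod_k r_{\theta_{(k)}}^{\zeta_k} \leq \sqrt{2}$, and the $p$-case gives $0 \leq \sqrt{2} - \prod_k (\sqrt{2}-r_{\theta_{(k)}})^{\zeta_k} \leq \sqrt{2}$.

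For the base case $m=2$, I would compute $\theta_{(1)}^{\zeta_{1,q}}$ and $\theta_{(2)}^{\zeta_{2,q}}$ from Definition \ref{algebraic}(g), then form their $\otimes_q$-product using Definition \ref{algebraic}(c). The membership and non-membership components yield the desired products immediately. The radius part produces
\[
\frac{1}{\sqrt{2}}\Bigl(\sqrt{2}(r_{\theta_{(1)}}/\sqrt{2})^{\zeta_1}\Bigr)\Bigl(\sqrt{2}(r_{\theta_{(2)}}/\sqrt{2})^{\zeta_2}\Bigr) = \sqrt{2}\,(r_{\theta_{(1)}}/\sqrt{2})^{\zeta_1}(r_{\theta_{(2)}}/\sqrt{2})^{\zeta_2},
\]
which collapses to $r_{\theta_{(1)}}^{\zeta_1} r_{\theta_{(2)}}^{\zeta_2}$ because $\zeta_1+\zeta_2=1$. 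The $p$-variant is analogous: combining Definitions \ref{algebraic}(h) and \ref{algebraic}(d), the same telescoping identity converts $\sqrt{2}-\sqrt{2}(1-r_{\theta_{(1)}}/\sqrt{2})^{\zeta_1}(1-r_{\theta_{(2)}}/\sqrt{2})^{\zeta_2}$ into $\sqrt{2}-(\sqrt{2}-r_{\theta_{(1)}})^{\zeta_1}(\sqrt{2}-r_{\theta_{(2)}})^{\zeta_2}$.

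For the inductive step, assuming the formulas hold at $m=n$, I would write
\[
\mbox{\textit{D-IFCGIO}}_q^\tau(\theta_1,\ldots,\theta_{n+1}) = \mbox{\textit{D-IFCGIO}}_q^\tau(\theta_1,\ldots,\theta_n) \otimes_q \theta_{(n+1)}^{\zeta_{n+1,q}},
\]
and similarly for the $p$-variant. Applying the inductive hypothesis together with Definitions \ref{algebraic}(c), (g) (respectively (d), (h)) termwise, the membership and non-membership products extend by one factor in the expected way. In the $q$-case, the new radius factor $r_{\theta_{(n+1)}}^{\zeta_{n+1}}$ slots directly into the product by the $\otimes_q$ rule (again using that the accumulated $\sqrt{2}$-factors cancel since $\sum_{k=1}^{n+1}\zeta_k=1$). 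In the $p$-case, the combination rule for $\otimes_p$ applied to the inductive radius $\sqrt{2}-\prod_{k=1}^n(\sqrt{2}-r_{\theta_{(k)}})^{\zeta_k}$ and the extra factor yields the desired product up to index $n+1$.

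The main obstacle, exactly as in the arithmetic case handled in Theorem \ref{adifc}, is the careful bookkeeping of the radius updates, particularly in the $p$-variant where the identity $\sqrt{2}-\sqrt{2}(1-r/\sqrt{2})(1-r'/\sqrt{2}) = \sqrt{2}-(\sqrt{2}-r)(\sqrt{2}-r')/\sqrt{2}$ has to be iterated so that the cumulative $\sqrt{2}$-denominators cancel cleanly. The telescoping $\sum_k \zeta_k = 1$ is precisely what drives these cancellations, and once it is invoked the induction concludes routinely.
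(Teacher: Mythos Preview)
Your proposal is correct and follows exactly the route the paper takes: the paper's own proof is the single sentence ``It can be proved similar to Theorem \ref{adifc},'' and your induction argument with the $\otimes_q,\otimes_p,\theta^{\zeta_q},\theta^{\zeta_p}$ operations in place of their additive counterparts is precisely that similarity spelled out. Your identification of the telescoping identity $\sum_k\zeta_k=1$ as the driver of the $\sqrt{2}$-cancellations in the radius component is the key technical point, and matches what is (implicitly) used in the paper's proof of Theorem \ref{adifc}.
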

\begin{proof}
    It can be proved similar to Theorem \ref{adifc}.
\end{proof}

\begin{remark}
    Let $\theta_1=\langle (\mu_{\theta_1},\nu_{\theta_1});r_{\theta_1}\rangle$ and $\theta_2=\langle (\mu_{\theta_2},\nu_{\theta_2});r_{\theta_2}\rangle$ such that $\mu_{\theta_1}\leq \mu_{\theta_2}, \nu_{\theta_1}\geq \nu_{\theta_2},r_{\theta_1}\leq r_{\theta_2}$ or $\mu_{\theta_1}\geq \mu_{\theta_2}, \nu_{\theta_1}\leq \nu_{\theta_2},r_{\theta_1}\geq r_{\theta_2}$. Assume that the weighted vector is $\omega=(\omega_1,\omega_2)=(0.3,0.7)$ and a fuzzy measure $\tau$ is constructed by $\tau(\varnothing)=0, \tau(\{1\})=0.22, \tau(\{2\})=0.58,$ and $\tau(\{1,2\})=1$. Then we have

\begin{eqnarray*}
    \mbox{\textit{D-IFWGO}}_p(\theta_1,\theta_2)&=& \Big\langle \Big(\mu_{\mbox{\textit{D-IFWGO}}_p},\nu_{\mbox{\textit{D-IFWGO}}_p}\Big);r_{\mbox{\textit{D-IFWGO}}_p}  \Big\rangle  \\
    &=&\Big\langle \Big(\mu_{\theta_1}^{0.3}\mu_{\theta_2}^{0.7},1-(1-\nu_{\theta_1})^{0.3}(1-\nu_{\theta_2})^{0.7}\Big);\sqrt{2}-(\sqrt{2}-\mu_{\theta_1})^{0.3}(\sqrt{2}-\mu_{\theta_2})^{0.7}\Big\rangle
\end{eqnarray*}
and
\begin{eqnarray*}
    \mbox{\textit{D-IFCGIO}}_p^\tau(\theta_1,\theta_2)&=& \Big\langle \Big(\mu_{\mbox{\textit{D-IFCGIO}}_p^\tau},\nu_{\mbox{\textit{D-IFCGIO}}_p^\tau}\Big);r_{\mbox{\textit{D-IFCGIO}}_p^\tau}  \Big\rangle   
\end{eqnarray*}
where 
$$ \mu_{\mbox{\textit{D-IFCGIO}}_p^\tau}=
\begin{cases} 
     \mu_{\theta_1}^{0.42}\mu_{\theta_2}^{0.58}&, \mbox{ if } \mu_{\theta_1}\leq \mu_{\theta_2} \\
      \mu_{\theta_1}^{0.78}\mu_{\theta_2}^{0.22} &, \mbox{ if } \mu_{\theta_1}\geq \mu_{\theta_2}\\
\end{cases}$$
$$ \nu_{\mbox{\textit{D-IFCGIO}}_p^\tau}=
\begin{cases} 
     1-(1-\nu_{\theta_1})^{0.42}(1-\nu_{\theta_2})^{0.58}&, \mbox{ if } \nu_{\theta_1}\geq \nu_{\theta_2} \\
      1-(1-\nu_{\theta_1})^{0.78}(1-\nu_{\theta_2})^{0.22} &, \mbox{ if } \nu_{\theta_1}\leq \nu_{\theta_2}\\
\end{cases}$$
$$ r_{\mbox{\textit{D-IFCGIO}}_p^\tau}=
\begin{cases} 
     \sqrt{2}-(\sqrt{2}-r_{\theta_1})^{0.42}(\sqrt{2}-r_{\theta_2})^{0.58}&, \mbox{ if } r_{\theta_1}\leq r_{\theta_2} \\
      \sqrt{2}-(\sqrt{2}-r_{\theta_1})^{0.78}(\sqrt{2}-r_{\theta_2})^{0.22} &, \mbox{ if } r_{\theta_1}\geq r_{\theta_2}\\
\end{cases}$$
    Figure \ref{chqouetweigtedcom} shows a comparison of operators $\mbox{\textit{D-IFWGO}}_p$ and $\mbox{\textit{D-IFCGIO}}_p^\tau$. It can be seen that these two operators have different characteristics.
\end{remark}

\begin{figure}[h!]
    \centering
    \includegraphics[scale=0.55]{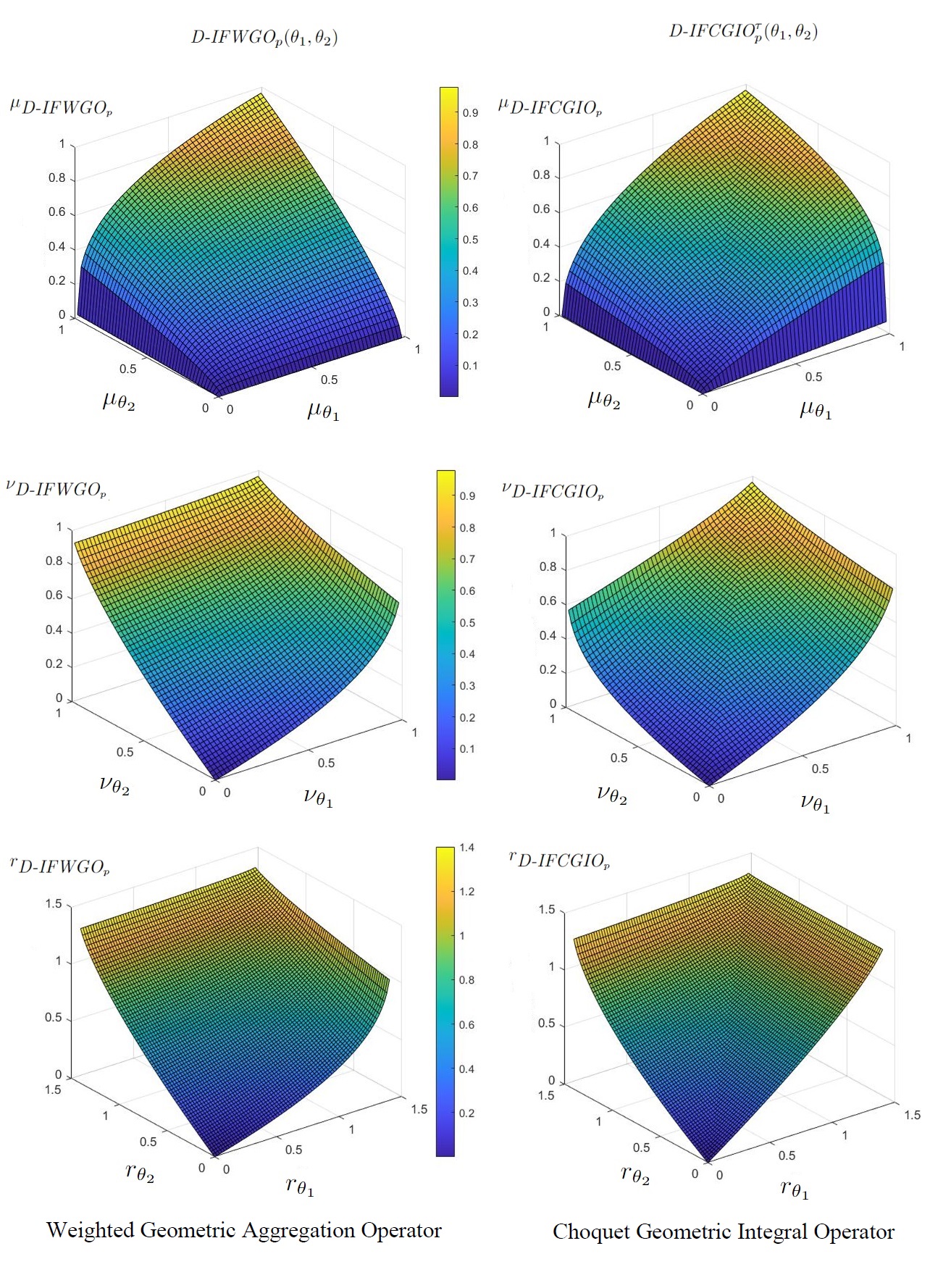}
    \caption{A comparison of Choquet geometric integral operator (right side) and weighted geometric aggregation operator (left side)}
    \label{chqouetweigtedcom}
\end{figure}

\begin{theorem}
    Consider a collection of D-IFVs denoted as $\{\theta_k=\langle (\mu_{\theta_k},\nu_{\phi_k});r_{\theta_k}\rangle:k=1,\ldots,m\}$ and $\omega=(\omega_1,\ldots,\omega_m)$ is a weight vector. If $\tau$ is an additive measure such that $\tau([m])=\sum_{k=1}^m \omega_k$, then 
\[
\mbox{\textit{D-IFCAIO}}_q^\tau(\theta_1,\ldots,\theta_m)=\mbox{\textit{D-IFWAO}}_q(\theta_1,\ldots,\theta_m),
\]
 \[
\mbox{\textit{D-IFCAIO}}_p^\tau(\theta_1,\ldots,\theta_m)=\mbox{\textit{D-IFWAO}}_p(\theta_1,\ldots,\theta_m),
\] 
\[
\mbox{\textit{D-IFCGIO}}_q^\tau(\theta_1,\ldots,\theta_m)=\mbox{\textit{D-IFWGO}}_q(\theta_1,\ldots,\theta_m),
\]
and 
\[
\mbox{\textit{D-IFCGIO}}_p^\tau(\theta_1,\ldots,\theta_m)=\mbox{\textit{D-IFWGO}}_p(\theta_1,\ldots,\theta_m).
\]
\end{theorem}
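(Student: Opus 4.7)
The plan is to verify the four identities by direct substitution into the closed forms from Theorem \ref{adifc} and its geometric counterpart, using that additivity collapses each Choquet weight $\tau(F_{(k)})-\tau(F_{(k+1)})$ to $\omega_{(k)}$, and then observing that a product over all permuted indices coincides with the product over the original indices.

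First I would record the consequence of additivity. Under the hypothesis $\tau([m])=\sum_{k=1}^{m}\omega_k$ with the natural identification $\tau(\{k\})=\omega_k$, repeated application of the additivity rule on disjoint unions yields $\tau(A)=\sum_{k\in A}\omega_k$ for every $A\subseteq[m]$. In particular, because $F_{(k)}=F_{(k+1)}\cup\{(k)\}$ with $(k)\notin F_{(k+1)}$, I get the key reduction
\[
\tau(F_{(k)})-\tau(F_{(k+1)})=\omega_{(k)}, \qquad k=1,\ldots,m.
\]

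Next I would substitute this into the expression for $\mbox{\textit{D-IFCAIO}}_q^\tau(\theta_1,\ldots,\theta_m)$ from Theorem \ref{adifc}, turning it into
\[
\Bigg\langle\Bigg(1-\prod_{k=1}^{m}(1-\mu_{\theta_{(k)}})^{\omega_{(k)}},\ \prod_{k=1}^{m}\nu_{\theta_{(k)}}^{\omega_{(k)}}\Bigg);\ \prod_{k=1}^{m}r_{\theta_{(k)}}^{\omega_{(k)}}\Bigg\rangle.
\]
Since the map $k\mapsto(k)$ is a bijection on $\{1,\ldots,m\}$ and the pair $(\theta_{(k)},\omega_{(k)})$ moves together under it, each product over the permuted indices equals the corresponding product over the original indices; hence the right-hand side reduces to the explicit formula for $\mbox{\textit{D-IFWAO}}_q(\theta_1,\ldots,\theta_m)$ recorded after Definition \ref{dıfwa}. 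This proves the first identity.

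The remaining three identities follow by the same substitution applied to the closed forms of $\mbox{\textit{D-IFCAIO}}_p^\tau$, $\mbox{\textit{D-IFCGIO}}_q^\tau$ and $\mbox{\textit{D-IFCGIO}}_p^\tau$; for the $p$-versions one additionally uses the bijection argument on the radius factor $\prod_{k}(\sqrt{2}-r_{\theta_{(k)}})^{\omega_{(k)}}=\prod_{k}(\sqrt{2}-r_{\theta_{k}})^{\omega_k}$. I do not anticipate a substantive obstacle: the proof is essentially a computation. The only care point is keeping the permutation $(\cdot)$ attached simultaneously to the datum $\theta$ and to the weight coming from the Choquet difference, so that the pair is respected when reindexing; once this is observed, the four identities are immediate.
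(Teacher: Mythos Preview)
Your proposal is correct and is exactly the computation the paper has in mind; the paper's own proof consists solely of the line ``The proof can be easily seen.'' Your write-up simply makes explicit the two observations that the authors leave implicit: additivity gives $\tau(F_{(k)})-\tau(F_{(k+1)})=\omega_{(k)}$, and reindexing via the permutation $k\mapsto(k)$ recovers the weighted-operator formulas.
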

\begin{proof}
    The proof can be easily seen.
\end{proof}

\section{An Application to Solar Panels} \label{appsolar}

This section introduces the CASPAS method following an analysis of the solar panel decision problem. The proposed method is then applied to a specific decision-making scenario.

\subsection{An assessment of the types of solar panels}

Various kinds of solar panels are available. Different types and technologies have been developed according to various usage needs and conditions. Main types and characteristics of solar panels are summarized Table \ref{tab:my_label3333}.

\footnotesize

\begin{table}
    \centering
\begin{tabular}{l|m{9cm}}
\hline
  \textbf{Solar Panels}   & \textbf{Characteristics} \\ \hline
  Monocrystalline 
  
  Silicon Panel ($P_1$) &  \begin{itemize}
    \item Made using single-crystal silicon.
    \item Typically high efficiency (around 18-22\%).
    \item Black cells with a characteristic look, often with cut corners.
    \item Generally more expensive than other types.
    \item  Suitable for smaller spaces where high energy output is required \cite{Dobrzański, Parida}.
\end{itemize}\\ \hline

Polycrystalline Solar Panel ($P_2$) & \begin{itemize}
    \item Produced using multi-crystal silicon.
    \item Usually slightly lower than monocrystalline panels (around 15-17\%).
    \item Blue-hued cells with straight edges.
    \item More economical, generally lower priced.
    \item  Preferred for large areas where cost-effectiveness is a priority \cite{Dobrzański, Parida}.
\end{itemize} \\ \hline

Thin-Film Solar Panel ($P_3$) & \begin{itemize}
    \item Made from various materials (e.g., amorphous silicon, CdTe, CIGS).
    \item Generally lower efficiency (around 10-13\%).
    \item Flexible structures, often with a darker color.
    \item Cheaper to manufacture but may produce less power per square meter.
    \item  Suitable for large areas; flexible nature advantageous for certain specialized applications \cite{Lee}. 
\end{itemize}\\ \hline
 PERC Solar Panel ($P_4$) & \begin{itemize}
    \item  Additional technologies added to traditional monocrystalline or polycrystalline cells.
    \item Higher efficiency compared to standard panels (above 20\%).
    \item Similar to monocrystalline or polycrystalline panels.
    \item The advanced manufacturing process can make PERC panels slightly more expensive than traditional panels, but the cost is often offset by their higher efficiency.
    \item Used for achieving high energy efficiency in limited spaces \cite{Green}.
\end{itemize} \\ \hline
Bifacial Solar Panel ($P_5$) & \begin{itemize}
    \item Cells can absorb light from both front and back surfaces. Bifacial panels are designed with solar cells that are capable of capturing sunlight on both sides.
    \item Capable of higher energy production due to dual-sided light collection (often generating 5-30\% more energy compared to standard panels under optimal conditions).
    \item More effective when mounted over reflective surfaces.
    \item Higher upfront costs compared to traditional panels due to their advanced technology and materials.
    \item Suitable for a variety of installations, including commercial, residential, and utility-scale solar projects \cite{Guerrero-Lemus}.
\end{itemize}\\

\hline
\end{tabular}
\caption{Types of solar panels and their main characteristics}
    \label{tab:my_label3333}
\end{table}
\normalsize

The choice of solar panel type depends on factors such as efficiency requirements, available space, budget considerations, and the specific application or installation environment. Ongoing research and technological advancements continue to expand the range of options, making solar energy an increasingly versatile and accessible renewable energy source. In the solar panel selection problem to be examined in this article, the criteria to be used in the evaluation according to the opinion of energy systems experts are listed below:
\begin{itemize}
    \item \textbf{Efficiency ($T_1$)}: The efficiency of solar panels refers to their ability to convert sunlight into usable electricity. It is a critical factor in evaluating the performance and economic viability of solar energy systems. Solar panel efficiency is typically expressed as a percentage and represents the ratio of the electrical power output to the solar energy input.

    \item \textbf{Cost ($T_2$)}: 
The cost of solar panels can vary widely depending on several factors, including the type of solar panel, the size of the system, the location of installation, and the specific manufacturer or brand. Prices have been consistently decreasing over the years due to technological advancements and increased market competition. 

 \item \textbf{Durability ($T_3$)}: The durability of solar panels is a measure of their ability to withstand environmental conditions, maintain performance, and continue generating electricity over time without significant degradation. Durability is crucial for the long-term reliability and economic viability of a solar energy system. 

 \item \textbf{Installation Complexity ($T_4$)}: The installation complexity of solar panels refers to the level of difficulty and intricacy involved in the process of setting up a solar panel system. Several factors contribute to the installation complexity, and they can affect the overall cost, time, and expertise required to successfully install solar panels.
\end{itemize}
Choosing the right solar panels involves a careful evaluation of these criteria in relation to the project's specific goals, budget, and location. Consulting with solar energy experts and conducting thorough research on available options can help in making an informed decision.

\subsection{CASPAS approach} \label{waspasch}
 WASPAS was proposed by Zavadskas et al. \cite{Zavadskas1}. WASPAS integrates the principles of both WSM and WPM using a method that applies weights to each. The method involves assigning weights to different criteria to represent their relative importance, normalizing the decision matrix, and then aggregating the scores for each alternative.  WASPAS is known for its flexibility in incorporating both the weighted sum and the weighted product models, allowing decision-makers to adapt the method to different decision contexts. 

In this subsection,  CASPAS is proposed for disc intuitionistic fuzzy setting. This method enables an accurate modelling to be established in the decision-making process by taking into account the interaction between the criteria. We outline the procedural steps for employing the extended D-IF CASPAS method in the examination of MCGDM problems as follows:

\begin{itemize}
\item \textbf{Step I:} Form a MCGDM problem, where there are $m$ alternatives denoted as $P=\{P_1,\ldots,P_m\}$. A group of $n$ experts, identified as $E=\{E_1,\ldots,E_n\}$, assesses these alternatives using $k$ criteria outlined in $T=\{T_1,\ldots,T_k\}$.
\item \textbf{Step II:} Every expert assesses the alternatives by employing the linguistic terms provided in Table \ref{tab:my_label11} which is each of the linguistic terms corresponds to a D-IFV. When there is a cost criterion present, the values for this criterion undergo a complement operation. Thus, a disc intuitionistic fuzzy group normalized decision matrix is created.
 
\begin{table}[h!]
    \centering
    \begin{tabular}{lc}
    \hline
     Linguistic term  & D-IFV  \\ \hline
      Extremely High (EH)  & $\langle ( 0.9,0.1);0.9\rangle$ \\
      Very High (VH)  & $\langle ( 0.8,0.2);0.8\rangle$ \\
       High (H)  & $\langle ( 0.7,0.3);0.7\rangle$ \\
      Moderately High (MH) & $\langle ( 0.6,0.4),0.6\rangle$ \\
      Medium (M)  & $\langle ( 0.5,0.5);0.5\rangle$ \\
      Moderately Low (ML)  & $\langle ( 0.4,0.6);0.4\rangle$ \\
      Low (L)  & $\langle ( 0.3,0.7);0.3\rangle$ \\
      Very Low (VL)  & $\langle ( 0.2,0.8);0.2\rangle$ \\
      Extremely Low (EL)  & $\langle ( 0.1,0.9);0.1\rangle$ \\ \hline
    \end{tabular}
    \caption{Linguistic terms corresponding D-IFV}
    \label{tab:my_label11}
\end{table}

\item\textbf{Step III:} A weight vector of the experts is determined according to the experience of the experts. Then the disc intuitionistic fuzzy group normalized decision matrix  is aggregated by utilizing one of operators $\mbox{\textit{D-IFWAO}}_q^A$, $\mbox{\textit{D-IFWAO}}_p^A$, $\mbox{\textit{D-IFWGO}}_q^A$, and $\mbox{\textit{D-IFWGO}}_p^A$. 

\item \textbf{Step IV:} With the help of Table \ref{tab:my_label11}, the importance weights of the criteria are evaluated by the experts. The importance weights matrix of criteria is formed.

\item\textbf{Step V:} By using weight vector of engineers, the importance weights matrix of criteria  is aggregated similarly to Step III.

\item\textbf{Step VI:} Compute the score function value $\mathcal{S}_j$ of each criterion $T_j$  in  aggregated importance weights matrix of criteria. Then normalize score of each criterion $T_j$ using

\begin{equation}\label{eq1}
    \omega_j=\frac{\mathcal{S}_j}{\sum\limits_{j=1}^k \mathcal{S}_j}
\end{equation}
Thus the weight vector of the criteria is obtained.

\item\textbf{Step VII:} Construct a fuzzy measure $\tau$ to be able compute disc intuitionistic fuzzy
Choquet operators after determining an interaction index $\lambda$ and weight vector of the criteria.

\item\textbf{Step VIII:}  Calculate results of $CSM_i$ based on $\mbox{D-IFCAIO}_q^\tau$ and $\mbox{D-IFCAIO}_p^\tau$ for each alternative $P_i$ using aggregated decision matrix and one of the operators
\begin{equation}\label{eqcsm1}
  CSM_i^q=\Bigg\langle \Bigg( 1-\prod_{j=1}^k(1-\mu_{\theta_{T_{(j)}}})^{\tau(F_{(j)})-\tau(F_{(j+1)})}, \prod_{j=1}^k\nu_{\theta_{T_{(j)}}}^{\tau(F_{(j)})-\tau(F_{(j+1)})}\Bigg); \prod_{j=1}^k r_{\theta_{T_{(j)}}}^{\tau(F_{(j)})-\tau(F_{(j+1)})} \Bigg\rangle  
\end{equation}
or
\begin{equation}\label{eqcsm2}
    CSM_i^p= \Bigg\langle \Bigg(1-\prod_{j=1}^k(1-\mu_{\theta_{T_{(j)}}})^{\tau(F_{(j)})-\tau(F_{(j+1)})}, \prod_{j=1}^k \nu_{\theta_{T_{(j)}}}^{\tau(F_{(j)})-\tau(F_{(j+1)})}\Bigg); \sqrt{2}-\prod_{j=1}^k\Big(\sqrt{2}-r_{\theta_{T_{(j)}}}\Big)^{\tau(F_{(j)})-\tau(F_{(j+1)})} \Bigg\rangle
\end{equation}

where the sequence $\{\theta_{T_{(j)}}\}_{j=0}^k$ indicates the indices permuted such that $\theta_{T_{(1)}}\preceq \theta_{T_{(2)}}\preceq\ldots \preceq \theta_{T_{(k)}}$, $F_{(j)}:=\{T_{(j)},T_{(j+1)},\ldots,T_{(k)}\}$ and $F_{(k+1)}=\emptyset$.
\item\textbf{Step IX:}  Calculate results of $CPM_i$  based on $\mbox{\textit{D-IFCGIO}}_q^\tau$ and $\mbox{\textit{D-IFCGIO}}_p^\tau$ for each alternative $P_i$ using aggregated decision matrix obtained in Step III and one of the operators
\begin{equation}
    CPM_i^q=\Bigg\langle \Bigg( \prod_{j=1}^k \mu_{\theta_{T_{(j)}}}^{\tau(F_{(j)})-\tau(F_{(j+1)})}, 1-\prod_{j=1}^k(1-\nu_{\theta_{T_{(j)}}})^{\tau(F_{(j)})-\tau(F_{(j+1)})} \Bigg) ; \prod_{j=1}^k r_{\theta_{T_{(j)}}}^{\tau(F_{(j)})-\tau(F_{(j+1)})} \Bigg\rangle
\end{equation}
or
\begin{equation}
   CPM_i^p= \Bigg\langle \Bigg( \prod_{j=1}^k \mu_{\theta_{T_{(j)}}}^{\tau(F_{(j)})-\tau(F_{(j+1)})}, 1-\prod_{j=1}^k(1-\nu_{\theta_{T_{(j)}}})^{\tau(F_{(j)})-\tau(F_{(j+1)})} \Bigg);\sqrt{2}-\prod_{j=1}^k\Big(\sqrt{2}-r_{\theta_{T_{(j)}}}\Big)^{\tau(F_{(j)})-\tau(F_{(j+1)})} \Bigg\rangle 
\end{equation}
where the sequence $\{\theta_{T_{(j)}}\}_{j=0}^k$ indicates the indices permuted such that $\theta_{T_{(1)}}\preceq \theta_{T_{(2)}}\preceq\ldots \preceq \theta_{T_{(k)}}$, $F_{(j)}:=\{T_{(j)},T_{(j+1)},\ldots,T_{(k)}\}$ and $F_{(k+1)}=\emptyset$.

\item\textbf{Step X:} Select the threshold number $\varepsilon\in [0,1]$ and find significance degree of each $P_i$ by utilizing one of formulas
\begin{equation}
   SD^q_i=\varepsilon CSM_i^q \oplus_{q} (1-\varepsilon) CPM_i^q 
\end{equation}

or 
\begin{equation}
 SD^p_i=\varepsilon CSM_i^p\oplus_{p} (1-\varepsilon) CPM_i^p   
\end{equation}

for $i=1,\ldots m$. 
\item\textbf{Step XI:} Compute the score values of the significance degrees $SD_i(i=1,\ldots m)$.
\item\textbf{Step XII:} The alternatives are ranked with the one with the highest score being the best alternative. If two alternatives have equal score values, the values of their accuracy functions may be taken into consideration.
\end{itemize}
\subsection{ A numerical example} \label{numerical} A university is planning to transition its energy sources to renewable ones by installing solar panels across its campus. This initiative aims to reduce carbon footprint, cut energy costs, and serve as an educational resource for students studying renewable energy technologies. The decision involves multiple criteria and alternatives, making it a complex MCGDM problem.

\begin{itemize}
    \item \textbf{Step I:} In the decision-making process, the university administration has appointed three engineers: two $E_1,E_2$ possess PhD degrees, while the third $E_3$ has a master's degree.  The evaluation process of solar panels includes five solar panels: monocrystalline silicon panel $(P_1)$, polycrystalline solar panel $(P_2)$, thin-film solar panel $(P_3)$, PERC solar panel $(P_4)$, bifacial solar panel $(P_5)$. There are four criteria: efficiency ($T_1$), cost ($T_2$), durability ($T_3$), installation complexity ($T_4$) for evaluating these panel types.
\item\textbf{Step II:} Three engineers assess the alternatives based on the linguistic terms presented in  Table \ref{tab:my_label11}. The evaluation results are shown in Table \ref{tab1333}. Then, by employing D-IFVs associated with each linguistic term, disc intuitionistic fuzzy group decision matrix can be constructed. Given that 
$T_2$ and $T_4$ represent cost criteria, we take the complement of these values. Consequently, this process results in the creation of the disc intuitionistic fuzzy group normalized decision matrix, as depicted in Table \ref{tab1444}.
    
    \begin{table}[h!]
    \centering
    \begin{tabular}{c|c|cccc}
    \hline
 
    Experts & Solar Panels & $T_1$ &$T_2$&$T_3$&$T_4$ \\ \hline
      & $P_1$ &EH&MH&EH & M\\
     & $P_2$ &VH &M&VH&M\\
  $E_1$  & $P_3$ &MH&L&ML&ML\\
     & $P_4$ &VH&MH&H&M\\
     & $P_5$ &H&MH&VH&MH\\ \hline
     & $P_1$ &VH&H&VH & MH\\
     & $P_2$ &MH&M&H&MH\\
  $E_2$  & $P_3$ &M&M&M&ML\\
     & $P_4$ &VH&MH&VH&M\\
     & $P_5$ &VH&H&VH&M
     \\\hline
      & $P_1$ &EH&VH&H & ML\\
     & $P_2$ &H&ML&MH&M\\
  $E_3$  & $P_3$ &ML&VL&ML&VL\\
     & $P_4$ &H&H&MH&ML\\
     & $P_5$ &H&VH&H&H
     \\\hline
    \end{tabular}
    \caption{ Group decision matrix with linguistic term}
    \label{tab1333}
\end{table}

\begin{table}[h!]
    \centering
    \begin{tabular}{c|c|cccc}
    \hline
 
    Engineers & Solar Panels & $T_1$ &$T_2$&$T_3$&$T_4$ \\ \hline
      & $P_1$ &$\langle (0.9,0.1);0.9\rangle$&$\langle (0.4,0.6);0.6\rangle$&$\langle (0.9,0.1);0.9\rangle$ & $\langle (0.5,0.5);0.5\rangle$\\
     & $P_2$ &$\langle (0.8,0.2);0.8\rangle$ &$\langle (0.5,0.5);0.5\rangle$&$\langle (0.8,0.2);0.8\rangle$&$\langle (0.5,0.5);0.5\rangle$\\
  $E_1$  & $P_3$ &$\langle (0.6,0.4);0.6\rangle$&$\langle (0.7,0.3);0.3\rangle$&$\langle (0.4,0.6);0.4\rangle$&$\langle (0.6,0.4);0.4\rangle$\\
     & $P_4$ &$\langle (0.8,0.2);0.8\rangle$&$\langle (0.4,0.6);0.6\rangle$&$\langle (0.7,0.3);0.7\rangle$&$\langle (0.5,0.5);0.5\rangle$\\
     & $P_5$ &$\langle (0.7,0.3);0.7\rangle$&$\langle (0.4,0.6);0.6\rangle$&$\langle (0.8,0.2);0.8\rangle$&$\langle (0.4,0.6);0.6\rangle$\\ \hline
     & $P_1$ &$\langle (0.8,0.2);0.8\rangle$&$\langle (0.3,0.7);0.7\rangle$&$\langle (0.8,0.2);0.8\rangle$ & $\langle (0.4,0.6);0.6\rangle$\\
     & $P_2$ &$\langle (0.6,0.4);0.6\rangle$&$\langle (0.5,0.5);0.5\rangle$&$\langle (0.7,0.3);0.7\rangle$&$\langle (0.4,0.6);0.6\rangle$\\
  $E_2$  & $P_3$ &$\langle (0.5,0.5);0.5\rangle$&$\langle (0.5,0.5);0.5\rangle$&$\langle (0.5,0.5);0.5\rangle$&$\langle (0.6,0.4);0.4\rangle$\\
     & $P_4$ &$\langle (0.8,0.2);0.8\rangle$&$\langle (0.4,0.6);0.6\rangle$&$\langle (0.8,0.2);0.8\rangle$&$\langle (0.5,0.5);0.5\rangle$\\
     & $P_5$ &$\langle (0.8,0.2);0.8\rangle$&$\langle (0.3,0.7);0.7\rangle$&$\langle (0.8,0.2);0.8\rangle$&$\langle (0.5,0.5);0.5\rangle$
     \\\hline
      & $P_1$ &$\langle (0.9,0.1);0.9\rangle$&$\langle (0.2,0.8);0.8\rangle$&$\langle (0.7,0.3);0.7\rangle$ & $\langle (0.6,0.4);0.4\rangle$\\
     & $P_2$ &$\langle (0.7,0.3);0.7\rangle$&$\langle (0.6,0.4);0.4\rangle$&$\langle (0.6,0.4);0.6\rangle$&$\langle (0.5,0.5);0.5\rangle$\\
  $E_3$  & $P_3$ &$\langle (0.4,0.6);0.4\rangle$&$\langle (0.8,0.2);0.2\rangle$&$\langle (0.4,0.6);0.4\rangle$&$\langle (0.8,0.2);0.2\rangle$\\
     & $P_4$ &$\langle (0.7,0.3);0.7\rangle$&$\langle (0.3,0.7);0.7\rangle$&$\langle (0.6,0.4);0.6\rangle$&$\langle (0.6,0.4);0.4\rangle$\\
     & $P_5$ &$\langle (0.7,0.3);0.7\rangle$&$\langle (0.2,0.8);0.8\rangle$&$\langle (0.7,0.3);0.7\rangle$&$\langle (0.3,0.7);0.7\rangle$
     \\\hline
    \end{tabular}
    \caption{Disc intuitionistic fuzzy group normalized decision matrix}
    \label{tab1444}
\end{table}
\item \textbf{Step III:} The weight vector of experts according to their experience is $\Omega=(\Omega_{1},\Omega_{2},\Omega_{3})=(0.4,0.4,0.2)$. After the disc intuitionistic fuzzy group normalized decision matrix is
aggregated by using operator $\mbox{\textit{D-IFWAO}}_q^A$, we obtain aggregated disc intuitionistic fuzzy decision matrix shown in Table \ref{tab1555}.

\begin{table}[h!]
    \centering
    \begin{tabular}{c|cccc}
    \hline
 
   Solar Panels  & $T_1$ &$T_2$&$T_3$&$T_4$ \\ \hline
       $P_1$ &$\langle (0.87,0.13);0.86\rangle$&$\langle (0.32,0.68);0.68\rangle$&$\langle (0.84,0.16);0.82\rangle$ &$\langle (0.49,0.51);0.51\rangle$\\
      $P_2$ &$\langle (0.71,0.29);0.69\rangle$&$\langle (0.52,0.48);0.48\rangle$&$\langle (0.73,0.27);0.72\rangle$&$\langle (0.46,0.54);0.54\rangle$\\
  $P_3$ &$\langle (0.53,0.47);0.51\rangle$&$\langle (0.66,0.34);0.34\rangle$&$\langle (0.44,0.56);0.44\rangle$&$\langle (0.65,0.35);0.35\rangle$\\
      $P_4$ &$\langle (0.78,0.22);0.78\rangle$&$\langle (0.38,0.62);0.62\rangle$&$\langle (0.73,0.27);0.72\rangle$&$\langle (0.52,0.48);0.48\rangle$\\
      $P_5$ &$\langle (0.75,0.25);0.74\rangle$&$\langle (0.32,0.68);0.68\rangle$&$\langle (0.78,0.22);0.78\rangle$&$\langle (0.42,0.58);0.58\rangle$\\ \hline

    \end{tabular}
    \caption{Aggregated disc intuitionistic fuzzy decision matrix}
    \label{tab1555}
\end{table}
\item \textbf{Step IV:} The  importance weights of the criteria are assessed by engineers with assistance from Table \ref{tab:my_label11}, leading to the importance weights matrix of criteria listed in Table \ref{tab:my_label6666}.

\item \textbf{Step V:}  By utilizing weight vector of experts $\Omega=(\Omega_{1},\Omega_{2},\Omega_{3})=(0.4,0.4,0.2)$  and operator $DIFWA_p^A$, the importance weights matrix of criteria given in Table \ref{tab:my_label6666} is aggregated. Aggregated importance weights matrix of criteria is also shown in Table \ref{tab:my_label6666}.

\begin{table}[]
    \centering
    \begin{tabular}{|c|ccc|c|}
    \hline
     Criteria & $E_1$ & $E_2$ & $E_3$& Aggregated results \\ \hline
       $T_1$  & $\langle (0.8,0.2);0.8\rangle$ & $\langle (0.7,0.3);0.7\rangle$ &$\langle (0.5,0.5);0.5\rangle$& $\langle (0.72,0.28);0.69 \rangle$ \\
       $T_2$  & $\langle (0.6,0.4);0.6\rangle$ & $\langle (0.6,0.4);0.6\rangle$ &$\langle (0.4,0.6);0.4\rangle$&$\langle (0.57,0.43);0.55 \rangle$ \\
       $T_3$  & $\langle (0.6,0.4);0.6\rangle$ & $\langle (0.5,0.5);0.5\rangle$ &$\langle (0.3,0.7);0.3\rangle$&$\langle (0.51,0.49);0.49 \rangle$ \\
       $T_4$  & $\langle (0.5,0.5);0.5\rangle$ & $\langle (0.4,0.6);0.4\rangle$ &$\langle (0.2,0.8);0.2\rangle$&$\langle (0.41,0.59);0.38 \rangle$ \\ \hline
    
    \end{tabular}
    \caption{The importance weights matrix of criteria and its aggregated results}
    \label{tab:my_label6666}
\end{table}

\item \textbf{Step VI:} After computing the score value of each criterion $T_j$ via $\mathcal{S}$ introduced in Definition \ref{scoref} for $\xi=0.8$, we normalize score of each criterion $T_j$ using Eq. \ref{eq1}. Therefore the weight vector of the criteria given in is $\omega=(\omega_1,\omega_2,\omega_3,\omega_4)=(0.326,0.258,0.232,0.184)$.

\item\textbf{Step VII:} A fuzzy measure employing the $\lambda$-fuzzy methodology can be formulated using Takahagi's algorithm.  To specify the construction, let's assign $\lambda$ a value of $0.5$ and consider weights $\omega=(\omega_1,\omega_2,\omega_3,\omega_4)=(0.326,0.258,0.232,0.184)$. This yields the fuzzy measure $\tau$, depicted in Table \ref{tab:my_label777}.

\begin{table}[h!]
    \centering
    \begin{tabular}{lll}
    \hline
       $\tau(\varnothing)=0$  & $\tau(\{T_1\})=0.282631$ &$\tau(\{T_2\})=0.220555$    \\
       $\tau(\{T_3\})=0.197269$  & $\tau(\{T_4\})=0.154918$ &$\tau(\{T_1,T_2\})=	0.534354$    \\
       $\tau(\{T_1,T_3\})=0.507777$  & $\tau(\{T_1,T_4\})=0.459442$ &$\tau(\{T_2,T_3\})=0.439578$    \\
       $\tau(\{T_2,T_4\})=	0.392557$  & $\tau(\{T_3,T_4\})=0.367467$ &$\tau(\{T_1,T_2,T_3\})=0.784329$    \\
       $\tau(\{T_1,T_2,T_4\})=0.730663$&$\tau(\{T_1,T_3,T_4\})=0.702027$ & $\tau(\{T_2,T_3,T_4\})=	0.628546$\\
       $\tau(\{T_1,T_2,T_3,T_4\})=1$\\ \hline
        
    \end{tabular}
    \caption{A fuzzy measure}
    \label{tab:my_label777}
\end{table}
\item \textbf{Step VIII-IX:} By computing $CSM_i$ and $CPM_i$ for each $P_i$, the results of $CSM_i$ and $CPM_i$ are obtained and are summarized in Table \ref{tab18888}.

\begin{table}[h!]
    \centering
    \begin{tabular}{|c|cccc|}
    \hline
 
   Solar Panels  & $CSM^q$ &$CSM^p$&$CPM^q$&$CPM^p$ \\ \hline
       $P_1$ &$\langle (0.71,0.29);0.72\rangle$&$\langle (0.71,0.29);0.74\rangle$&$\langle (0.57,0.43);0.72\rangle$ &$\langle (0.57,0.43);0.74\rangle$\\
      $P_2$ &$\langle (0.62,0.38);0.59\rangle$&$\langle (0.62,0.38);0.62\rangle$&$\langle (0.59,0.40);0.59\rangle$&$\langle (0.59,0.40);0.62\rangle$\\
  $P_3$ &$\langle (0.56,0.44);0.42\rangle$&$\langle (0.56,0.44);0.43\rangle$&$\langle (0.55,0.45);0.42\rangle$&$\langle (0.55,0.45);0.43\rangle$\\
      $P_4$ &$\langle (0.63,0.37);0.65\rangle$&$\langle (0.63,0.37);0.67\rangle$&$\langle (0.57,0.43);0.65\rangle$&$\langle (0.57,0.43);0.67\rangle$\\
      $P_5$ &$\langle (0.61,0.39);0.69\rangle$&$\langle (0.61,0.39);0.70\rangle$&$\langle (0.52,0.48);0.69\rangle$&$\langle (0.52,0.48);0.70\rangle$\\ \hline

    \end{tabular}
    \caption{The results of $CSM_i$ and $CPM_i$}
    \label{tab18888}
\end{table}
\item\textbf{Step X:} If we select $\varepsilon=0.3$, significance degrees $SD_i$ of each $P_i$ are listed in Table \ref{tab19999}. 
\begin{table}[h!]
    \centering
    \begin{tabular}{|c|cc|}
    \hline
 
   Solar Panels  & $SD^q$ &$SD^p$ \\ \hline
       $P_1$ &$\langle (0.619,0.380);0.717\rangle$&$\langle (0.619,0.380);0.741\rangle$\\
      $P_2$ &$\langle (0.601,0.399);0.596\rangle$&$\langle (0.601,0.399);0.620\rangle$\\
  $P_3$ &$\langle (0.553,0.447);0.420\rangle$&$\langle (0.553,0.447);0.429\rangle$\\
      $P_4$ &$\langle (0.593,0.407);0.651\rangle$&$\langle (0.593,0.407);0.668\rangle$\\
      $P_5$ &$\langle (0.553,0.447);0.695\rangle$&$\langle (0.553,0.447);0.702\rangle$\\ \hline

    \end{tabular}
    \caption{The results of significance degrees $SD_i$}
    \label{tab19999}
\end{table}
\item\textbf{Step XI:} Score values of the significance degrees $SD_i(i=1,\ldots m)$ are attained by utilizing $\mathcal{S}$ introduced in Definition \ref{scoref} for $\xi=0.8$. The results are shown Table \ref{tab10000}. 

\begin{table}[h!]
    \centering
    \begin{tabular}{|c|cc|}
    \hline
 
   Solar Panels  & $\mathcal{S}(SD^q)$ &$\mathcal{S}(SD^p)$ \\ \hline
       $P_1$ &$0.596$&$0.600$\\
      $P_2$ &$0.565$&$0.569$\\
  $P_3$ &$0.501$&$0.502$\\
      $P_4$ &$0.567$&$0.568$\\
      $P_5$ &$0.540$&$0.541$\\ \hline

    \end{tabular}
    \caption{The score values of significance degrees $SD^q$ and $SD^p$ }
    \label{tab10000}
\end{table}
\item\textbf{Step XII:} For score values of significance degree $SD^q$, the order is established as $P_1>P_4>P_2>P_5>P_3$, whereas for score values associated with significance degree $SD^p$, the sequence is determined as $P_1>P_2>P_4>P_5>P_3$. Both $SD^q$ and $SD^p$ determine $P_1$  as the most favorable solar panel. Monocrystalline silicon panel emerges as the best alternative as a result of the decision-making process. The difference in the ranking of $P_2$ and $P_4$ solar panels is a result of the interactions between the criteria and the radius change due to the circular structure
\end{itemize}

\section{Assessment of performance} \label{performance}

In this section, we conduct sensitivity analysis and validity assessment to observe how variations in different scenarios of the solar panel selection problem may impact the ranking of alternatives. 

\subsection{Sensitivity analysis}
Sensitivity analysis is a technique used to assess how the variation (or uncertainty) in the output of a mathematical model or system can be attributed to variations in the inputs. The main goal of sensitivity analysis is to understand which input parameters have the most significant impact on the output of the model. This analysis helps in identifying critical factors and understanding the robustness of the model.
In this subsection, we  concentrate on an analysis of the effect of the variation of the parameters ($\varepsilon$ and $\xi$) considered in the proposed D-IF CASPAS method on the ranking of alternatives. The study of these parameters reveals the impact of different aspects of D-IFVs and the CASPAS method on the ranking results of the solar panel selection problem.  
\begin{itemize}
 \item \textbf{Analysis of parameter $\varepsilon$:} In Step X of the proposed CASPAS method given in Sub-section \ref{waspasch}, the $CSM^q$, $CSM^p$, $CPM^q$, and $CPM^q$ values are calculated as D-IFVs by adjusting the threshold number $\varepsilon\in [0,1]$. $\varepsilon$ affects the results of the significance degrees $SD^q$ and $SD^p$ obtained from $CSM^q$, $CSM^p$, $CPM^q$, and $CPM^q$. As $\varepsilon$ grows larger, the impact of $CSM^q$ on $SD^q$ becomes more pronounced, whereas the influence of $CPM^q$ diminishes. A similar scenario applies to $SD^p$ as well. In Step X of the proposed CASPAS method, an analysis is applied by varying $\varepsilon$ between $0.1$ and $0.9$ to examine the effect of $\varepsilon$ on the results. Figure \ref{epsilon} shows the results of the ranking of the $\varepsilon$ parameter. These results are calculated according to the score values $\xi=0.8$. The ranking results can be respectively expressed as follows. For the significance degree $SD^q$, two different rankings are obtained. For $\varepsilon=0.1$ and $0.2$  we find the order $P_1>P_2>P_4>P_5>P_3$, while by varying $\varepsilon$ from $0.3$ to $0.7$ we get the order $P_1>P_4>P_2>P_5>P_3$. If $\varepsilon$ equals $0.8$ and $0.9$, $P_1>P_4>P_5>P_2>P_3$ is attained. For the significance degree $SD^p$, we get also two different rankings. When $\varepsilon$ changes from $0.1$ and $0.3$, the ordering of alternatives remains stable as $P_1>P_2>P_4>P_5>P_3$. Conversely, the ranking shifts to the ranking $P_1>P_4>P_2>P_5>P_3$ as $\varepsilon$ increases from $0.4$ to $0.9$. Upon examining the ranking of alternatives, it becomes evident that $P_3$ consistently emerges as the least favorable alternative and $P_1$ also appears to be the most favorable alternative across all parameters $\varepsilon$ ranging from $0.1$ to $0.9$ for both $SD^q$ and $SD^p$. Nonetheless, when the results are analysed, the order of the second, third and fourth ranked alternatives varies according to the epsilon parameter. This is mainly due to the adjustment of the effect of the $CSM$ and $CPM$ in the proposed CASPAS method on the significance $SD$ with the $\varepsilon$ parameter. The significant role of the radius in D-IFSs  and D-IFVs in altering these rankings is noteworthy. With adjustments in the radius degree, there can be shifts in the alternatives because the score values are affected. Consequently, the use of D-IFSs and D-IFVs in decision-making leads to more accurate modeling compared to assigning fixed values to criteria, thanks to the adaptable nature of their circular framework. 

 \begin{figure}[h!]
    \centering
    \includegraphics[scale=0.55]{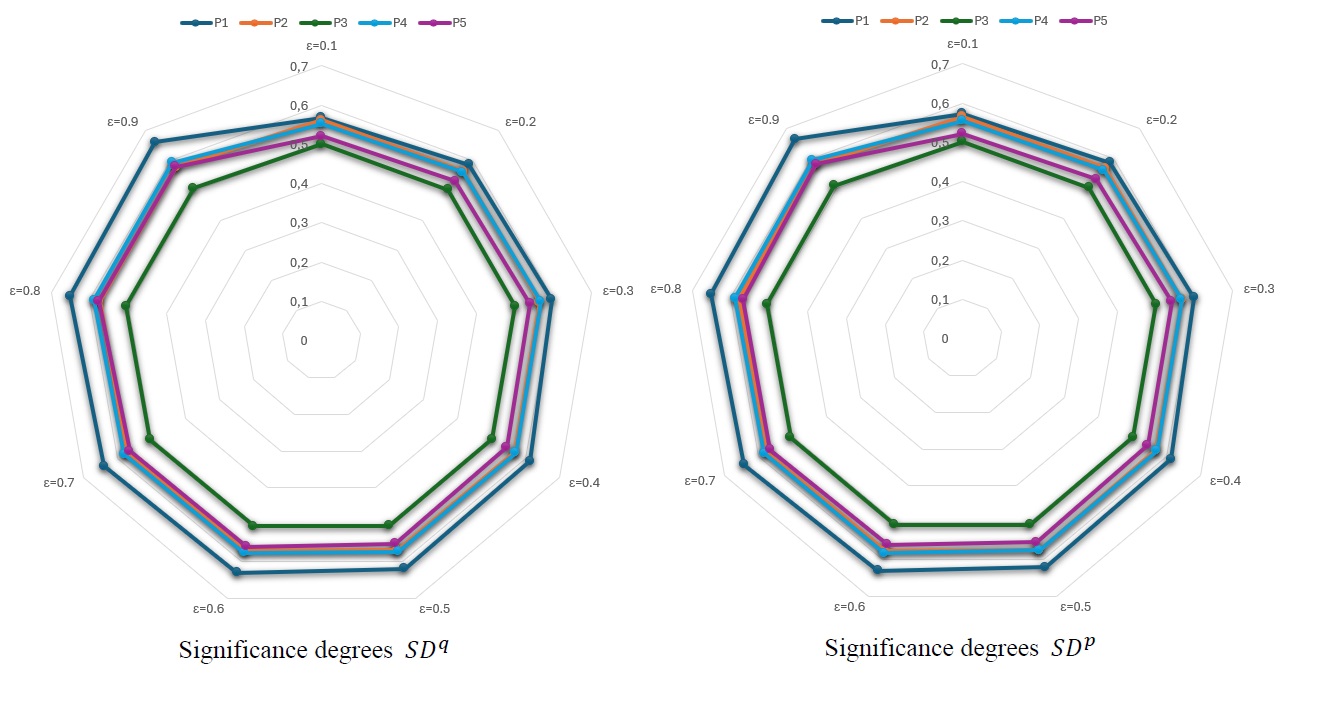}
    \caption{Sensitivity analysis across $\varepsilon$ values ranging from $0.1$ to $0.9$}
    \label{epsilon}
\end{figure}

    \item \textbf{Analysis of parameter $\xi$:} In Step XI of the CASPAS method proposed in Sub-section \ref{waspasch}, the score function proposed in Definition \ref{scoref} is used to calculate the score values of the significance degrees $SD^q$ and $SD^p$. The most important advantage of this score function is that the effect of the radius degree varies depending on the decision maker's choice of the parameter $\xi$ between 0 and 1. Thus, the effect of the radius in D-IFVs can be adjusted and its effect on the ranking can be analyzed. As $\xi$ approaches $1$, the degree of radius becomes essentially irrelevant in the decision-making process. To assess the impact of the parameter
$\xi$ as detailed in Step XI of the proposed CASPAS methodology, we perform an analysis by varying
$\xi$ from $0.1$ to $0.9$ for the score values of the significance degrees $SD^q$ and $SD^p$. The radar chart for the variation of the parameter $\xi$ between $0.1$ and $0.9$ is shown in Figure \ref{radarchart}. The findings obtained from Figure \ref{radarchart} can be summarised as follows. According to significance degree $SD^q$, there are four different ranking. The
consistent ranking of alternatives is $P_1>P_5>P_4>P_2>P_3$ from $\xi=0.1$ and $0.4$. When $\xi$ equals $0.5$, alternatives are ordered as $P_1>P_4>P_5>P_2>P_3$.  When $\xi$ varies from $0.6$ and $0.8$, the ranking of alternatives is obtained as $P_1>P_4>P_2>P_5>P_3$. Lastly the ranking $P_1>P_2>P_4>P_5>P_3$ is obtained for $\xi=0.9$.  According to significance degree $SD^p$, we reach four different rankings. If $\xi$ varies between $0.1$ and $0.3$, we attain the ranking of alternatives as $P_1>P_5>P_4>P_2>P_3$. For $\xi=0.4$ and $0.5$ the ranking is $P_1>P_4>P_5>P_2>P_3$. When $\xi$ is set to $0.6$ and $0.7$, the
consistent ranking of alternatives is obtained as $P_1>P_4>P_2>P_5>P_3$. Lastly we get ranking $P_1>P_2>P_4>P_5>P_3$ for $\xi=0.8$ and $0.9$. When the ranking of the alternatives is analysed, it is seen that $P_3$ is the worst alternative and $P_1$ is the best alternative for each parameter $\xi$ from $0.1$ and $0.9$ according to $SD^q$ and $SD^p$. However, the ranking of the second, third and fourth alternatives changes. For instance, according to $SD^q$ when $\xi=0.1$, $P_5$ is the second  best alternative, while for $\xi=0.5$, $P_4$ becomes the second best alternative. This variation results from the circular structures of D-IFSs and D-IFVs. Since the  parameter $\xi$ in the score function is adjustable, the impact of the radius degree on rankings also varies. This causes the order of alternatives to change. Additionally, one of the conclusions that can be drawn from this analysis is the potential for different rankings when the degrees of membership and non-membership are modeled with a radius degree instead of being precisely assigned. This highlights the significance of decision-making processes involving D-IFSs constructed with circular structures.  

\begin{figure}[h!]
    \centering
    \includegraphics[scale=0.55]{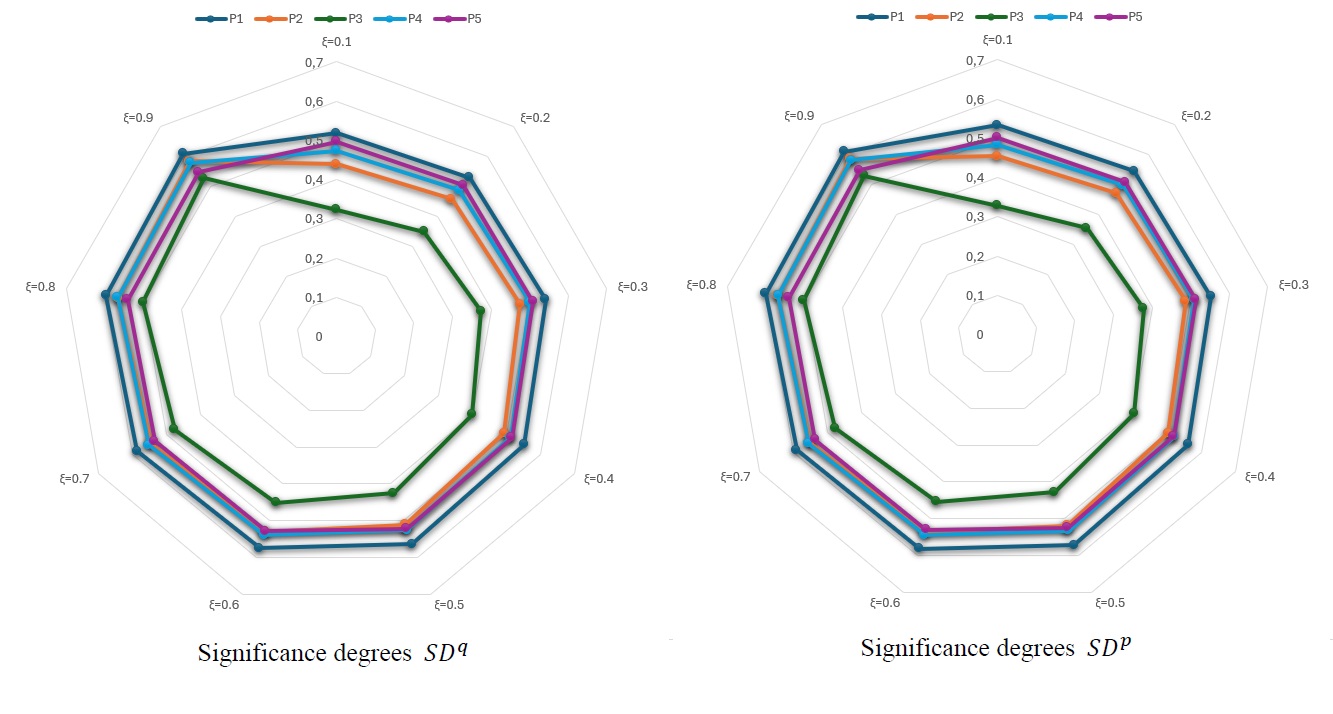}
    \caption{Sensitivity analysis across a range of $\xi$ values spanning from $0.1$ to $0.9$.}
    \label{radarchart}
\end{figure}
\end{itemize}

\subsection{Validity analysis}
Validity analysis in  MCDM and MCGDM involves assessing the reliability, relevance, and effectiveness of the decision-making process and the associated outcomes.
The efficacy of MCDM method is shaped by various factors such as the interplay between alternatives, the consistency of criteria, and impartial evaluations among decision-makers. Wang and Triantaphyllou \cite{WangX} introduced three evaluative criteria to validate the presented method in MCDM. These three test conditions are systematically applied to the proposed method as follow:

\begin{itemize}
    \item \textbf{I.Condition:} When replacing a non-optimal alternative with a worse one without altering the significance of each criterion, the best alternative should remain unchanged in a useful MCDM method.

    \item \textbf{II.Condition:} An effective MCDM method should exhibit the transitive property. In other words, if alternative $X$ is favored over alternative $Y$, and alternative $Y$ is chosen over alternative $Z$, then it follows that alternative $X$ should be selected over alternative $Z$.

     \item \textbf{III.Condition:} If a decision-making problem is broken down into multiple sub-problems, the order of the subproblems  through the decision-making approach should align with the ranking outcome of the original problem.
\end{itemize}
To assess the validity of the proposed CASPAS method, the following examinations are carried out.

\begin{itemize}
   \item \textbf{I.Condition:} In order to fulfil this condition, the non-optimal alternative $P_3$ is substituted with an arbitrary worse option $P'_3$, and the evaluation of $P'_3$ is presented in Table \ref{tab:my_label121211} based on the linguistic terms presented  in Table \ref{tab:my_label11}.

   \begin{table}[h!]
       \centering
       \begin{tabular}{|c|cccc|}
       \hline
         Engineers &  $T_1$ & $T_2$ & $T_3$ &$T_4$  \\ \hline
          $E_1$  &  M & L & L &ML  \\ \hline
         $E_2$  & ML & M & ML &M  \\ \hline
         $E_3$  & L & EL & L &VL  \\ \hline
          
       \end{tabular}
       \caption{The evaluation of $P'_3$ based on the linguistic terms}
       \label{tab:my_label121211}
   \end{table}
We can now explore if there are any shifts in the top-ranked alternative for $\varepsilon=0.3$ and $\xi=0.5$. Utilizing the proposed CASPAS method, the hierarchy of alternatives is determined as $P_1>P_5>P_4>P_2>P'_3$, with $P_1$ retaining its position as the best alternative. Therefore, I.Conditon satisfies for the proposed method.
\item \textbf{II. and III. Condition:} The same solar panel problem is divided into four sub-problems, which include $\{P_1, P_2, P_3,P_4\}, \{P_1, P_2, P_4,P_5\}, \{P_2, P_3, P_4,P_5\}$, and $\{P_1, P_2, P_3, P_5\}$, respectively. By using the proposed WASPAS method to solve these sub-problems, the ranking orders
of these sub-problems can be obtained as: $P_1>P_4>P_2>P_3$, $P_1>P_5>P_4>P_2$, $P_5>P_4>P_2>P_3$, and $P_1>P_5>P_2>P_3$. Upon merging the results of  these sub-problems, the comprehensive result is derived as: $P_1>P_5>P_4>P_2>P_3$ which is consistent with with the result of the same problem. Therefore, the proposed CASPAS
method is confirmed for I.Condition and  II.Condition.
\end{itemize}

\section{Comparative Analysis} \label{comparative}

The section commences with a comparison between weighted aggregation and Choquet integral operators, then proceeds to conduct a thorough analysis of the proposed CASPAS method in relation to the existing C-IF-TOPSIS and C-IF-VIKOR methods. Ultimately, it assesses the advantages of the newly proposed method over these existing approaches.

\subsection{A comparison of weighted aggregation operators with Choquet integral operators}

In this subsection, we discuss the comparison between two widely used aggregation operators: weighted sum and Choquet integrals. Our aim is to determine the optimal alternative among a variety of alternatives, each assessed based on distinct criteria. Despite the widespread popularity and simplicity of the weighted sum approach, the practical application of the Choquet integral remains challenging. However, theoretically, the Choquet integral promises superior outcomes aligning more closely with a decision maker's preferences. Meyer and Pirlot \cite{Meyer} have demonstrated that the Choquet integral can consider significantly more preferences compared to the weighted arithmetic mean, with a notable difference especially evident when the number of criteria is high. Furthermore, Lust \cite{Lust} has shown that as the number of criteria increases, the use of the Choquet integral instead of the weighted arithmetic mean increases the likelihood of achieving a more optimal ranking. When using a weighted arithmetic mean, the interaction between decision maker priorities and criteria cannot be assessed with sufficient precision. In order to make a comparative analysis of the results of weighted operators and Choquet integral operators, in the proposed CASPAS method, in Steps VIII and IX, the ignificance  degrees $SD$ are calculated using $WSM$ and $WPM$ defined by weighted operators instead of $CSM$ and $CPM$ defined by Choquet integral operator.  $WSM$ and $WPM$ are defined as follows. For each alternative $P_i (i=1,\ldots m$), we have

\begin{equation}
    WSM^q_i=\Bigg\langle \Bigg( 1-\prod_{j=1}^k(1-\mu_{\theta_{T_{j}}})^{\omega_j}, \prod_{j=1}^k\nu_{\theta_{T_{j}}}^{\omega_j}\Bigg); \prod_{j=1}^k r_{\theta_{T_{j}}}^{\omega_j} \Bigg\rangle
\end{equation}
or 

\begin{equation}
    WSM^p_i=\Bigg\langle \Bigg(1-\prod_{j=1}^k(1-\mu_{\theta_{T_{j}}})^{\omega_j}, \prod_{j=1}^k \nu_{\theta_{T_{j}}}^{\omega_j}\Bigg); \sqrt{2}-\prod_{k=1}^m\Big(\sqrt{2}-r_{\theta_{T_{j}}}\Big)^{\omega_j} \Bigg\rangle
\end{equation}
and

\begin{equation}
    WPM^q_i=\Bigg\langle \Bigg( \prod_{j=1}^k \mu_{\theta_{T_{j}}}^{\omega_j}, 1-\prod_{j=1}^k(1-\nu_{\theta_{T_{j}}})^{\omega_j} \Bigg) ; \prod_{j=1}^k r_{\theta_{T_{j}}}^{\omega_j} \Bigg\rangle
\end{equation}
or

\begin{equation}
    WPM^p_i=\Bigg\langle \Bigg( \prod_{j=1}^k \mu_{\theta_{T_{j}}}^{\omega_j}, 1-\prod_{j=1}^k(1-\nu_{\theta_{T_{j}}})^{\omega_j} \Bigg) ; \sqrt{2}-\prod_{k=1}^m\Big(\sqrt{2}-r_{\theta_{T_{j}}}\Big)^{\omega_j} \Bigg\rangle
\end{equation}

The results of the significance degrees $SD$ based on $WSM$ and $WPM$  calculated for $\varepsilon=0.1$ are listed in Table \ref{tab:my_label3343} according to $q$ and $p$. Table \ref{tab:my_label3343} also shows the score values which are obtained by utilizing $\mathcal{S}$ introduced in Definition \ref{scoref} for $\xi=0.9$.

\begin{table}[h!]
    \centering
    \begin{tabular}{|c|cc|cc|}
    \hline
      Solar Panels   & $SD^q$ & $\mathcal{S}(SD^q)$ & $SD^p$ & $\mathcal{S}(SD^p)$ \\ \hline
         $P_1$   & $\langle (0.60,0.39);0.73\rangle$ & $0.596$ & $\langle (0.60,0.39);0.75\rangle$ & $0.598$ \\
         $P_2$   & $\langle (0.62,0.38);0.61\rangle$ & $0.601$ & $\langle (0.62,0.38);0.62\rangle$ & $0.602$ \\
         $P_3$   & $\langle (0.56,0.43);0.42\rangle$ & $0.538$ & $\langle (0.56,0.43);0.42\rangle$ & $0.538$ \\
         $P_4$   & $\langle (0.60,0.40);0.66\rangle$ & $0.587$ & $\langle (0.60,0.40);0.68\rangle$ & $0.588$ \\
         $P_5$   & $\langle (0.56,0.44);0.70\rangle$ & $0.587$ & $\langle (0.56,0.44);0.71\rangle$ & $0.554$ \\ \hline
    \end{tabular}
    \caption{The results of significance degrees $SD$ based on $WSM$ and $WPM$}
    \label{tab:my_label3343}
\end{table}

When the results in Table \ref{tab:my_label3343} are analysed, it is seen that the ranking $P_2>P_1>P_4>P_5>P_3$ is obtained according to $SD$ based on $WSM$ and $WPM$. On the other hand, according to the Choquet integral operator based CASPAS method proposed in Subsection \ref{waspasch}, we get the ranking $P_1>P_2>P_4>P_3>P_5$ by taking $\varepsilon=0.1$ and $\xi=0.9$. As can be seen from the results, according to the weighted aggregation operators, the most favorable solar panel is $P_2$ and the most unfavorable solar panel is P3, while according to the Choquet integral operator, the most favorable solar panel is $P_1$ and the most unfavorable solar panel is $P_5$. The main reason for this difference is that Choquet integral operators take into account the interaction between criteria. Since weighted aggregation operators do not consider the interaction between criteria, more precise results can be obtained with Choquet integral operators  in decision-making applications.  

\subsection{A comparison with C-IF TOPSIS}

In this section, we employ C-IF TOSPIS to use in the solar panel selection, conducting a thorough comparative analysis with our proposed approach.
TOPSIS (Technique for Order of Preference by Similarity to Ideal Solution) is a MCDM used to determine the best alternative from a set of options. It aims to identify the alternative that is closest to the ideal solution based on a set of criteria. TOPSIS, pioneered by Hwang and Yoon \cite{Hwang} as a method in MCDM, was crafted with the explicit goal of addressing shortcomings in prevailing decision-making methodologies. Its purpose is to offer a more intuitive and pragmatic framework, mitigating the subjectivity inherent in criterion weighting and enhancing result interpretation. By furnishing decision-makers with a structured and quantitative methodology, TOPSIS empowers them to systematically assess and rank alternatives across various criteria. Chen \cite{Chen} proposed evolved C‑IF Minkowski distance measures for C-IFSs and C-IFVs, which are defined by 
\begin{equation}
    \mathcal{D}_{(3)}^\beta(\theta,\kappa)=\frac{1}{2}\Bigg\{\frac{1}{\sqrt{2}}|r_\theta-r_\kappa|+\Bigg[\frac{1}{2}\Bigg(|\mu_\theta-\mu_\kappa|^\beta+|\nu_\theta-\nu_\kappa|^\beta\Bigg)\Bigg]^\frac{1}{\beta}\Bigg\},
\end{equation}
for two C-IFVs $\theta=\langle (\mu_\theta,\nu_\theta);r_\theta\rangle$ and $\kappa=\langle (\mu_\kappa,\nu_\kappa);r_\kappa\rangle$. Here, $\beta$ is a positive integer and represents the metric parameter. Later, Chen \cite{Chen} introduced a C-IF TOPSIS methodology based on C-IF Minkowski distance measures and applied it to a site selection issue of large-scale epidemic hospitals.  To put it briefly, in this C-IF TOPSIS method, after the aggregation process, the displaced ideal and anti-ideal ratings are derived with 
\begin{equation}
\theta_{P_*}=\Big\langle \max_{1\leq i \leq m} \mu_{\theta_{P_i}},\min_{1\leq i \leq m}\nu_{\theta_{P_i}}; \max_{1\leq i \leq m} r_{\theta_{P_i}} \Big \rangle,
\end{equation}
and
\begin{equation}
\theta_{P_\neg}=\Big\langle \min_{1\leq i \leq m} \mu_{\theta_{P_i}},\max_{1\leq i \leq m}\nu_{\theta_{P_i}}; \max_{1\leq i \leq m} r_{\theta_{P_i}} \Big \rangle,
\end{equation}
respectively. Then the relative closeness
coefficient is calculated with 
\begin{equation}
    \mathcal{R}_{(3)}^{\beta*}(P_i)=\frac{\mathcal{D}_{(3)}^\beta(\theta_{P_i},\theta_{P_\neg})}{\mathcal{D}_{(3)}^\beta(\theta_{P_i},\theta_{P_*})+\mathcal{D}_{(3)}^\beta(\theta_{P_i},\theta_{P_\neg})},
\end{equation}
for $i=1,\ldots,m$. The alternative with the highest relative closeness
coefficient is the best alternative.

Let us apply C-IF-TOPSIS to solve the solar panel selection problem discussed in Sub-section \ref{numerical}. If the disc intuitionistic decision matrix given in Table \ref{tab1555} is aggregated with  operator  $\mbox{\textit{D-IFWGO}}_q$ with the weight vector $w=(0.326,0.258,0.232,0.181)$, the results listed in Table \ref{tab:my_label34533} are obtained with the displaced ideal and anti-ideal ratings.

\begin{table}[h!]
    \centering
    \begin{tabular}{|c|c|}
    \hline
         & $\mbox{\textit{D-IFWGO}}_q$ \\ \hline
      $P_1$   & $\langle (0.59,0.41);0.73 \rangle$\\
       $P_2$   & $\langle (0.61,0.39);0.61 \rangle$\\
        $P_3$   & $\langle (0.56,0.44);0.42 \rangle$\\
         $P_4$   & $\langle (0.59,0.41);0.66 \rangle$\\
          $P_5$   & $\langle (0.55,0.45);0.70 \rangle$\\ \hline
           $P_*$   & $\langle (0.61,0.39);0.73 \rangle$\\
           $P_\neg$   & $\langle (0.55,0.45);0.73 \rangle$\\ \hline
           
    \end{tabular}
    \caption{Aggregation results with the displaced ideal and anti-ideal ratings}
    \label{tab:my_label34533}
\end{table}

After the calculation for $\beta=3$, we find the relative closeness
coefficients $\mathcal{R}_{(3)}^{\beta*}(P_1)=0.667, \mathcal{R}_{(3)}^{\beta*}(P_2)=0.631, \mathcal{R}_{(3)}^{\beta*}(P_3)=0.459, \mathcal{R}_{(3)}^{\beta*}(P_4)=0.562,$ and $\mathcal{R}_{(3)}^{\beta*}(P_5)=0.207$. Therefore, the ranking $P_1>P_2>P_4>P_3>P_5$ is attained. This ranking is consistent with the proposed CASPAS method. In other words, there is consistency between the results of C-IF-TOPSIS and D-IF-CASPAS methods. 
\subsection{A comparison with C-IF VIKOR}
The VIKOR (VIseKriterijumska Optimizacija I Kompromisno Resenje) method is utilized in MCDM and MCGDM, pioneered by Opricovic \cite{Opricovic}. Its main goal is to pinpoint a compromise solution that serves as a practical alternative closely resembling the ideal solution, followed by establishing a ranking of these alternatives, considering various evaluation criteria. Subsequently, a variation of the VIKOR method, known as the fuzzy VIKOR method, was devised to handle MCDM challenges in fuzzy environments \cite{Opricovic1}. This approach emphasizes the proximity of alternatives to the ideal solution, effectively managing decision-making hurdles in fuzzy or uncertain contexts. The VIKOR approach determines a compromise solution based on mutual concessions. Kahraman and Otay \cite{Kahraman} proposed an extended VIKOR method for C-IFSs. Then, the waste disposal location selection problem was investigated using this C-IF VIKOR. This method can be briefly summarised as follows. Firstly, the decision matrices of the experts are aggregated and then the aggregated decision matrix is created. C-IF positive and negative ideal solutions are determined as
\begin{equation}
    \theta^*_j= \max_{1\leq i \leq m} \theta_{i,j} \mbox{ and }  \theta^-_j= \max_{1\leq i \leq m} \theta_{i,j}
\end{equation}
using the relative score function (RSF). Maximum group utility, minimum individual regret and VIKOR index are calculated by using $D_p$ and $D_o$ distances, which are defined in two different ways as pessimistic and optimistic view.

Now, we will implement the C-IF VIKOR method to address the solar panel selection issue outlined in Sub-section \ref{numerical}. Firstly, using the aggregated decision matrix given in Table \ref{tab1555}, the positive and negative ideal solutions shown in Table \ref{1235432} are determined according to the RSF.

\begin{table}[h!]
    \centering
    \begin{tabular}{|c|cccc|}
    \hline
         &  $T_1$ & $T_2$ & $T_3$ & $T_4$  \\\hline
       $\theta^*$  & $\langle (0.87,0.13);0.86 \rangle$ & $\langle (0.66,0.34);0.34 \rangle$& $\langle (0.84,0.16);0.82 \rangle$ & $\langle (0.65,0.35);0.35 \rangle$ \\ \hline
       $\theta^-$  & $\langle (0.53,0.47);0.51 \rangle$ & $\langle (0.32,0.68);0.68 \rangle$& $\langle (0.44,0.56);0.44 \rangle$ & $\langle (0.42,0.58);0.58 \rangle$ \\ \hline
    \end{tabular}
    \caption{C-IF positive and negative ideal solutions}
    \label{1235432}
\end{table}

By following the necessary steps, according to the VIKOR index, we reach the rankings $P_1>P_2>P_4>P_5>P_3$ for the optimistic case and $P_1>P_2>P_5>P_4>P_3$ for the pessimistic case. When compared with the proposed D-IF CASPAS method, it is seen that the results are compatible with C-IF VIKOR. 

In this section where the D-IF CASPAS method is compared to other approaches, the similarity in results between the proposed method and C-IF TOPSIS and C-IF VIKOR demonstrates result consistency. Furthermore, contrasting the weighted summation operators with the Choquet integral operator highlights how criteria interactions can influence outcomes. Despite its more intricate calculation process compared to weighted aggregation operators, Choquet integral operators yield more precise results by considering criterion interactions, facilitating more accurate modeling of real-world problems. Figure \ref{comparisonmethodsd-ıf} summarises the comparison of the results of the existing methods and the method proposed in this study.

\begin{figure}[h!]
    \centering
    \includegraphics[scale=0.55]{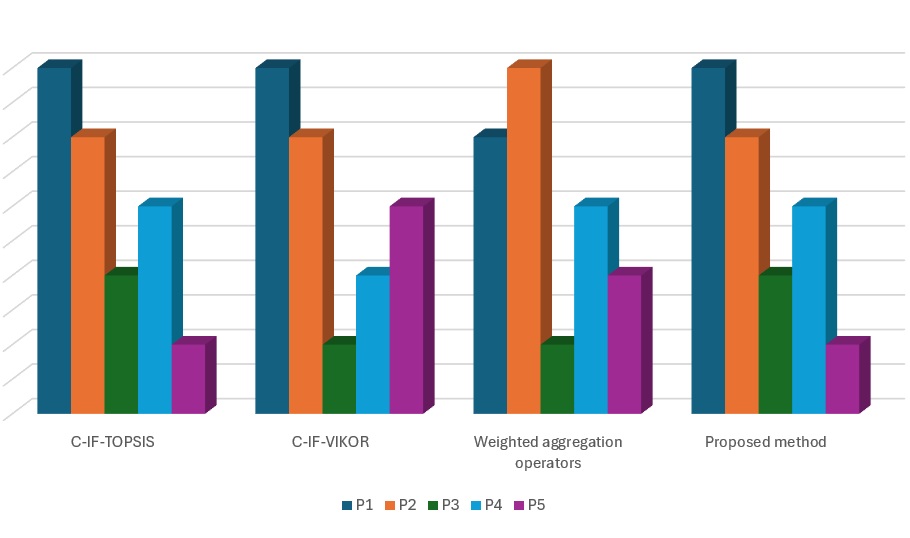}
    \caption{Ranking results of existing methods for the solar panel selection problem}
    \label{comparisonmethodsd-ıf}
\end{figure}

\section{Contributions and future studies}\label{confut}
This study proposed D-IFSs, a generalization of C-IFSs, and develops a range of operations for their application MCDM. A novel method, CASPAS, is proposed, which builds on the Choquet integral and incorporates interactions between criteria, addressing a significant gap in traditional weighted aggregation methods. D-IFVs are ranked using score and accuracy functions, and algebraic operations are formulated by extending the radius limit from 1 to \(\sqrt{2}\), allowing the creation of weighted arithmetic and geometric aggregation operators. Additionally, disc intuitionistic fuzzy Choquet interval operators are introduced, which improve the ability to model complex interdependencies in real-world decision-making scenarios. The CASPAS method is applied to a renewable energy problem, specifically the selection of optimal solar panels, and demonstrates its effectiveness in improving decision-making accuracy by incorporating interdependencies among selection criteria. Comparative analyzes with existing methods, alongside sensitivity and validity analysis, confirm the robustness and applicability of the proposed method. The study’s contributions advance the theoretical framework of D-IFSs and provide practical tools for enhancing renewable energy decision-making, supporting the global transition to sustainable energy solutions. Future studies could focus on expanding the CASPAS method to other complex decision-making problems, especially in sectors such as healthcare, transportation, and environmental management. Additionally, further exploration of different Choquet aggregation operators and their real-world applications in MCDM could refine the proposed methodology and contribute to a more comprehensive understanding of decision-making under uncertainty.

\section*{Acknowledgments} 

We gratefully thank Dr. Abdullah Erkam Gündüz (Ankara Yıldırım Beyazıt University, Electrical-Electronics Engineering Department) and Research Assistant Muhammed Selman Erel (Ankara Yıldırım Beyazıt University, Electrical-Electronics Engineering Department) for providing the expert opinions.

\end{document}